\newcommand{\be}{\begin{eqnarray*}}
\newcommand{\en}{\end{eqnarray*}}
\newcommand{\bea}{\begin{eqnarray}}
\newcommand{\ena}{\end{eqnarray}}
\newcommand{\C}{{\mathbb C}}
\newcommand{\Z}{{\mathbb Z}}
\newcommand{\slt}{\mathfrak{sl}_2}
\newcommand{\ka}{\kappa}
\def\g{\operatorname{\mathfrak{g}}}
\numberwithin{equation}{section}
\theoremstyle{plain}
\newtheorem{thm}{Theorem}[section]
\newtheorem{cor}[thm]{Corollary}
\newtheorem{lem}[thm]{Lemma}
\newtheorem{prop}[thm]{Proposition}
\newtheorem{dfn}[thm]{Definition}
\newtheorem{exmp}[thm]{Example}
\newtheorem{re}[thm]{Remark}
\begin{document}

\begin{flushleft}
Copyright (2011) American Institute of Physics. This article may be downloaded for personal use only. Any other use requires prior permission of the author and the American Institute of Physics.

The following article appeared in Journal of Mathematical Physics (Vol.52, Issue 8) and may be found at 
 http://link.aip.org/link/?JMP/52/083509
 
\end{flushleft}

\bigskip

\title{Hypergeometric solutions to Schr\"odinger equations for the quantum Painlev\'e equations}
\date{}
\author{H.~Nagoya}
\address{HN:Department of Mathematics, Kobe University,
 Kobe 657-8501, 
 Japan \\
 Research Fellow of the Japan Society for the Promotion of Science}\email{nagoya@math.kobe-u.ac.jp}

\maketitle

\begin{abstract}
We consider Schr\"odinger equations for the quantum Painlev\'e equations. 
We present hypergeometric solutions of the Schr\"odinger equations for the quantum Painlev\'e equations, as particular solutions.  
We also give a representation theoretic correspondence between Hamiltonians of the Schr\"odinger equations for 
the quantum Painlev\'e equations and those of the KZ equation 
or the confluent KZ equations. 

\end{abstract}
%MSC 2000:

\section{Introduction}
Quantizations of the Painlev\' e equations with  affine Weyl group symmetries, in the Heisenberg picture,  
were proposed and studied 
in  \cite{JNS}, \cite{N1}, 
%\cite{N2},  \cite{N3}, 
 \cite{N4}, and \cite{RNGT}.  
% These quantizations include quantizations of 
%the Painlev\'e equations of type $\mathrm{II, III, IV, V, VI}$, which admit the affine Weyl group actions of type 
%$A_1^{(1)}$, $C_2^{(1)}$,
% $A_2^{(1)}$, $A_3^{(1)}$, and $D_4^{(1)}$, respectively, as groups 
%of B\"acklund transformations. 
We have called these quantizations, quantum Painlev\'e equations. 
Although it is well-known that the Painlev\'e equations admit exact solutions, such as hypergeometric solutions, algebraic or 
rational solutions,  
%For these quantizations, 
% except the case of 
%the sixth and third quantum Painlev\'e equations, Lax representations  were given in \cite{N1}, \cite{N2}, and \cite{N3},    
%and Hamiltonians were obtained from the traces of some power of 
%Lax operators in \cite{N2} and \cite{N3}, as well as in the classical case. It is known that nonlinear differential systems with 
%the affine Weyl group symmetries have exact solutions. 
%Unfortunately, 
no exact solutions of the quantum Painlev\'e equations were given in the previous studies.

Aiming to construct exact solutions to the quantum Painlev\'e equations, together with M. Jimbo and J. Sun, 
we introduced confluent KZ equations for $\slt$, 
irregular singular versions of the KZ equations \cite{JNS}. 
The confluent KZ equations for $\slt$ have  integral formulas for solutions, which take values in confluent Verma 
modules. We derived the quantum Painlev\'e equations in a formal algebraic way  from the Heisenberg version of the 
confluent KZ equations.  In the derivation process, we need to invert some elements. For the 
confluent KZ equations associated with 
highest weight-type modules, this invertibility fails. Hence the integral solutions 
to the confluent KZ equations do not give rise to solutions to the quantum Painlev\'e equations. 

However, 
the work \cite{JNS} gives us some insight into a possible link between Hamiltonians of the (confluent)
 KZ equations and those of Schr\"odinger equations for the quantum Painlev\'e equations. Because  
 the elements that we have to invert,  do not come in Hamiltonians derived from the Hamiltonians of the Heisenberg version 
 of the confluent KZ equations, in the reduction process. Unfortunately, the result of \cite{JNS} can not 
 explain immediately a relation between the (confluent) KZ equations and Schr\"odinger equations of the 
 quantum Painlev\'e equations. We study them in the present paper.

We consider  following 
 Schr\"odinger equations for the quantum Painlev\'e equations:
\begin{equation}\label{eq sch}
\hbar\frac{\partial }{\partial t}\Phi(x,t)=\widehat{H}_\mathrm{J}\left(
x,\hbar\frac{\partial}{\partial x}, t
\right)
\Phi(x,t) \quad (\mathrm{J=II, III, IV, V, VI}), 
\end{equation}
where $\hbar$ is a parameter in $\C$ and Hamiltonians $\widehat{H}_\mathrm{J}$ are obtained from polynomial 
Hamiltonians of 
the Painlev\'e equations by substituting the operators $x$, $\hbar \partial / \partial x$ into the canonical coordinates 
$q$, $p$ (see Definition \ref{def hamiltonian}). 
%In other words, Hamiltonians $\widehat{H}_\mathrm{J}$ are  realizations of 
% Hamiltonians of quantum Painlev\'e equations in the previous studies, as elements 
%in the Weyl algebra $\C\langle \partial/ \partial x, x\rangle$. 
The Schr\"odinger equations for the quantum Painlev\'e equations of type $\mathrm{III, V, VI}$ 
  have appeared as 
differential equations satisfied by correlation functions of  two dimensional 
conformal field theory and expected to be differential equations satisfied by 
 instanton partition functions 
of $SU(2)$ gauge theories \cite{AFKMY}.  
 
 As an example, the quantized Hamiltonian $\widehat{H}_\mathrm{II}$ is  
 \begin{equation*}
\widehat{H}_\mathrm{II}=\frac{1}{2}\left(\hbar
\frac{\partial}{\partial x}
\right)^2-\left(
x^2+\frac{t}{2}
\right)\hbar\frac{\partial}{\partial x}+ax, 
\end{equation*}
where $a$ is a complex parameter. It is easy to see that the different ordering of the 
operators $x$ and $\partial/ \partial x$ can be absorbed in the parameter $a$. 
The other quantized Hamiltonians $\widehat{H}_\mathrm{J}$ also have same property. 
 
 Another property of the quantized Hamiltonians is that they act on $\bigoplus_{i=0}^m\C x^i$ if $a=m\hbar$ 
 with a nonnegative integer $m$.  Consequently, we can consider polynomial solutions 
 \begin{equation*}
\Phi(x,t)=\sum_{i=0}^m\varphi_i(t)x^i
\end{equation*}
and the Schr\"odinger equations for the quantum Painlev\'e equations \eqref{eq sch} become linear differential systems for $\varphi_i(t)$ ($i=0,1,\ldots,m$). We remark that this property is also satisfied by Schr\"odinger equations for a quantization of 
other Painlev\'e systems or isomonodromy deformations, for example, the Garnier system 
\cite{Garnier}. 

When $m=1$ and $\mathrm{J=II}$, the linear differential system is
\begin{equation*}
\frac{d}{dt}\varphi_0(t)=- \frac{t}{2}\varphi_1(t),
\quad
\frac{d}{dt}\varphi_1(t)= \varphi_0(t).  
\end{equation*}
 Namely, $\varphi_1(t)$ satisfies the Airy equation. For the cases of $\mathrm{J=III, IV, V, VI}$, 
 $\varphi_1(t)$ satisfy the Bessel equation, the Hermite-Weber equation, the Kummer 
  equation and the Gauss hypergeometric equation, respectively. 
 
 In terms of the integral representation,  a polynomial solution $\Phi(x,t)$ for $m=1$ and $\mathrm{J=II}$  
 can be expressed as 
 \begin{equation*}
\Phi(x,t)=\int_\Gamma \exp\left(-ut-\frac{2}{3}u^3\right)(x-u)du
\end{equation*}
for an appropriate cycle $\Gamma$. 
For general $m\in\Z_{\ge 0}$, polynomial solutions are expressed by 
a generalization of the integral formula above. 

 Let $\rho_\mathrm{J}(u)$ ($\mathrm{J=II, III, IV, V, VI}$) be the master functions of the integral formulas of the Airy function, 
the Bessel function, the Hermite-Weber function, the Kummer 
  function and the Gauss hypergeometric function, respectively (see Definition 
 \ref{def master function}).

\begin{thm}
For $m\in\Z_{\ge 0}$, the integral formula
\begin{equation}
\Phi^\mathrm{J}_m(x,t)=\int_\Gamma \prod_{1\le i<j\le m}(u_i-u_j)^{2\hbar}\prod_{i=1}^m \rho_\mathrm{J}(u_i)
(x-u_i)du_i
\end{equation}
is a solution of the Schr\"odinger equations for the quantum Painlev\'e equation of type $\mathrm{J}$ with  $a=m\hbar$. 
Here, $\Gamma$ is an $m$-cycle of the homology group 
determined by the integrand.  
\end{thm}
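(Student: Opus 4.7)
Write the integrand as $\omega(x, u, t) = \Delta(u)\, R(u, t)\, F(x, u)$, where $\Delta(u) = \prod_{i<j}(u_i - u_j)^{2\hbar}$, $R(u, t) = \prod_i \rho_\mathrm{J}(u_i, t)$, and $F(x, u) = \prod_i (x - u_i)$. The plan is to show directly that
\[
\widehat{H}_\mathrm{J}\,\omega \;-\; \hbar\tfrac{\partial}{\partial t}\omega \;=\; \sum_{j=1}^m \tfrac{\partial}{\partial u_j}\Psi_j
\]
for explicit $\Psi_j$ lying in the same twisted class as $\omega$; integration against $\Gamma$ will then annihilate the right-hand side by definition of a cycle in the relevant twisted homology. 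As a sanity check, in the $m = 1$, $\mathrm{J = II}$ case this identity can be verified by hand, and $\Psi_1$ turns out to be a scalar multiple of $\rho_\mathrm{II}$.

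\emph{Execution.} Step one is the $x$-degree constraint: a priori $\widehat{H}_\mathrm{J}\omega$ has degree $m+1$ in $x$, the offending piece being of the shape $-x^2\,\hbar\partial_x F + a x F$ (and its type-J analogues). The coefficient of $x^{m+1}$ is proportional to $(a - m\hbar)$, so the hypothesis $a = m\hbar$ kills it and makes $\widehat{H}_\mathrm{J}\omega$ a polynomial in $x$ of degree at most $m$ fibrewise in $u$. Step two is integration by parts. Compute
\[
\tfrac{\hbar\,\partial_{u_j}\omega}{\omega} \;=\; 2\hbar\sum_{i\neq j}\tfrac{1}{u_j - u_i} \;+\; \hbar\tfrac{\partial_{u_j}\rho_\mathrm{J}(u_j,t)}{\rho_\mathrm{J}(u_j,t)} \;-\; \tfrac{\hbar}{x - u_j},
\]
and invoke the master-function relations $\hbar\partial_t\log\rho_\mathrm{J} = P_\mathrm{J}(u,t)$, $\hbar\partial_u\log\rho_\mathrm{J} = Q_\mathrm{J}(u,t)$ together with the defining ODE of the associated hypergeometric function (Airy, Bessel, Hermite--Weber, Kummer, Gauss); that ODE furnishes exactly the polynomial identity between $P_\mathrm{J}$ and $Q_\mathrm{J}$ needed to match the $t$-derivative side with the $x$-Hamiltonian side. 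Multiplication by the $u_j$-polynomials produced in Step~1 can then be traded for $\hbar\partial_{u_j}$, modulo Vandermonde cross-terms $\sum_{i\neq j}(u_j - u_i)^{-1}$ and poles $(x - u_j)^{-1}$.

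\emph{Main obstacle.} The real work lies in organising the residual rational-in-$u$ terms that survive Step~2: (i) Vandermonde cross-terms $\sum_{j \neq k}(u_j - u_k)^{-1}(\cdots)$, (ii) poles $(x - u_j)^{-1}$ coming from $\partial_{u_j} F$, and (iii) residue contributions from $(\hbar\partial_x)^2 F$. These must either cancel under $j \leftrightarrow k$ antisymmetrisation -- the standard Selberg / Calogero--Sutherland mechanism built into $\Delta(u)^{2\hbar}$ -- or reassemble into an explicit total derivative $\partial_{u_j}(\omega \cdot G_j(x, u))$. The universal part of this cancellation is type-independent; the type-dependent finish is to check that $\partial_{u_j}\log\rho_\mathrm{J}$ supplies precisely the polynomial in $u_j$ needed to absorb the residual $x^k$-coefficients. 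I would carry out the universal part uniformly and then handle the five Painlev\'e types $\mathrm{J = II, III, IV, V, VI}$ case by case, since each master function contributes a slightly different low-degree polynomial correction reflecting the Painlev\'e type in question.
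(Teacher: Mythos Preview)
Your plan is correct and rests on the same mechanism as the paper: the degree bound from $a=m\hbar$, integration by parts in the $u$-variables (so that total derivatives vanish on the twisted cycle), and symmetrisation to kill the Vandermonde cross-terms, followed by a case-by-case check of the master functions. The one organisational difference is that the paper does \emph{not} attempt to exhibit $(\widehat H_\mathrm{J}-\hbar\partial_t)\omega$ as a total $u$-derivative directly at the integrand level. Instead it first expands $\prod_i(x-u_i)$ in elementary symmetric polynomials, so that $\Phi_m^\mathrm{J}=\sum_k(-1)^{k+1}\binom{m}{k}\langle\varphi_k\rangle x^k$ with $\varphi_k=\mathrm{Sym}\bigl[\prod_{l=1}^{m-k}u_l\bigr]$, computes $\partial_t\langle\varphi_k\rangle$, and then uses a \emph{single} explicit integration-by-parts identity (for $\mathrm{J=VI}$: the vanishing of $\bigl\langle\mathrm{Sym}\bigl[\nabla_1\bigl(u_1(1-u_1)\prod_{l=2}^{m-k}u_l\bigr)\bigr]\bigr\rangle$) to reduce this to a clean three-term recursion in $k$, which is then matched against the tridiagonal action of $\widehat H_\mathrm{J}$ on $\bigoplus_{i=0}^m\C x^i$. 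This buys you exactly the bookkeeping economy you flag as the ``main obstacle'': rather than tracking all $x$-powers and all residual rational terms simultaneously and hoping they reassemble into $\sum_j\partial_{u_j}\Psi_j$, you need only one $\nabla$-identity per $k$, and the symmetrisation step becomes a two-line computation. Your route would work, but the paper's coefficient-wise reduction is what makes the calculation actually short.
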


These integral formulas have been studied in various fields, such as, 
two dimensional conformal field theory,  random matrix theory(in particular  the $\beta$-ensembles, for example, see \cite{BEMP} and 
references therein), the theory of 
the orthogonal polynomials, and the theory of hypergeometric functions. 
In particular, if $\hbar=1$ and $\Gamma=\prod_{i=1}^m\Gamma_1$, where 
$\Gamma_1$ is an appropriate cycle for $\Phi^\mathrm{J}_1(x,t)$,  the integral formulas 
$\Phi^\mathrm{J}_m(x,t)$ ($m\in \Z_{\ge 0}$) are orthogonal polynomials in $x$ with respect to 
the weight function $\rho_\mathrm{J}(u)$. Because the orthogonal polynomials admit determinant 
representations (for example, see \cite{Szego}, Chapter II), we have 
\begin{cor}
If $\hbar=1$, then  determinant formulas
\begin{equation}\label{eq determinant formula}
P^\mathrm{J}_m(x,t)=\det
\begin{pmatrix}
\left(
\tau_\mathrm{J}^{(i+j-2)}
\right)_{1\le j\le m+1}^{1\le i\le m}
\\
\left(
x^{j-1}
\right)_{1\le j\le m+1}
\end{pmatrix}=
\left|
 \begin{matrix}
 \tau_\mathrm{J} & \tau_\mathrm{J}^{(1)} & \cdots & \tau_\mathrm{J}^{(m)}
  \\
  \tau_\mathrm{J}^{(1)} & \tau_\mathrm{J}^{(2)} & \cdots & \tau_\mathrm{J}^{(m+1)}
  \\
  \vdots &  &  & \vdots
  \\
  \tau_\mathrm{J}^{(m-1)} & \cdots & \cdots & \tau_\mathrm{J}^{(2m-1)}
  \\
  1 &x& \cdots & x^m
  \end{matrix}\right|,
\end{equation}
where 
$\tau_\mathrm{J}=\tau_\mathrm{J}(t)=\int_{\Gamma_1} \rho(u)du$,  $\tau_\mathrm{J}^{(i)}=\tau_\mathrm{J}^{(i)}(t)=\int_{\Gamma_1} u^i\rho(u)du $, 
are solutions of the Schr\"odinger equations for the quantum Painlev\'e equations of type  $\mathrm{J=II, III, IV, V, VI}$ with $a=m$. 
\end{cor}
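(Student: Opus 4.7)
The plan is to reduce the corollary directly to Theorem~1.1 by showing that, at $\hbar = 1$ with the product cycle $\Gamma = \Gamma_1^m$, the integral solution $\Phi^\mathrm{J}_m(x,t)$ equals $P^\mathrm{J}_m(x,t)$ up to a nonzero numerical constant. Since the Schr\"odinger equation \eqref{eq sch} is linear in $\Phi$, such a scalar proportionality automatically transfers the solution property to $P^\mathrm{J}_m$, which is all the corollary asks.

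To establish the proportionality, I would specialize the integral formula and invoke the classical Heine-type formula (Andreief's identity) for the orthogonal-polynomial integral:
\begin{equation*}
\frac{1}{m!}\int_{\Gamma_1^m} \prod_{1 \le i<j \le m}(u_i-u_j)^2 \prod_{i=1}^m (x-u_i)\rho(u_i)\, du_i = \det\begin{pmatrix} \mu_0 & \mu_1 & \cdots & \mu_m \\ \mu_1 & \mu_2 & \cdots & \mu_{m+1} \\ \vdots & & & \vdots \\ \mu_{m-1} & \cdots & \cdots & \mu_{2m-1} \\ 1 & x & \cdots & x^m \end{pmatrix},
\end{equation*}
where $\mu_k := \int_{\Gamma_1} u^k \rho(u)\, du$. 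Taking $\rho = \rho_\mathrm{J}$ gives $\mu_k = \tau_\mathrm{J}^{(k)}$, and the right-hand side is exactly the determinant $P^\mathrm{J}_m(x,t)$ of \eqref{eq determinant formula}.

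The derivation of this identity proceeds via the factorization, up to a sign depending only on $m$,
\begin{equation*}
\prod_{1 \le i < j \le m}(u_i - u_j) \cdot \prod_{i=1}^m (x - u_i) = \pm\det\bigl(y_i^{j-1}\bigr)_{1 \le i, j \le m+1}, \qquad (y_1, \dots, y_m, y_{m+1}) := (u_1, \dots, u_m, x),
\end{equation*}
i.e.\ the $(m+1)$-variable Vandermonde determinant. Together with $\prod_{i<j}(u_i - u_j) = \det(u_j^{i-1})_{1 \le i,j \le m}$, the integrand becomes a product of two determinants in the $u_i$ (with $x$ a parameter in the larger one). Expanding the larger Vandermonde along its bottom row $(1, x, \ldots, x^m)$, each cofactor pairs by Andreief's identity with the smaller $m \times m$ Vandermonde to yield an $m \times m$ Hankel minor of $\bigl(\tau_\mathrm{J}^{(i+j-2)}\bigr)$; the reassembly of these cofactors along $(1, x, \ldots, x^m)$ is precisely the Laplace expansion of $P^\mathrm{J}_m(x,t)$ along its last row.

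Concretely this yields $\Phi^\mathrm{J}_m(x,t)|_{\hbar=1,\, \Gamma=\Gamma_1^m} = m! \cdot P^\mathrm{J}_m(x,t)$, and the corollary follows by applying Theorem~1.1 with $a = m$. The only genuine computation is the Heine/Andreief reduction, and while it is completely standard in the theory of orthogonal polynomials and $\beta$-ensembles (cf.\ Szeg\H{o}), its applicability depends essentially on the Vandermonde exponent being $2$, i.e., on $\hbar = 1$. This is precisely the reason the finite Hankel-determinantal representation is available only in this case, and hence why the corollary is confined to $\hbar = 1$ even though Theorem~1.1 holds for arbitrary $\hbar$.
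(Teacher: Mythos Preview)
Your proposal is correct and follows essentially the same route as the paper: specialize Theorem~\ref{thm HS} to $\hbar=1$ with the product cycle $\Gamma_m^\mathrm{J}=\prod_{i=1}^m\Gamma_1^\mathrm{J}$, and then identify the resulting integral with the Hankel-type determinant $P_m^\mathrm{J}(x,t)$ (up to the constant $m!$) via the classical Heine/Andr\'eief formula from orthogonal-polynomial theory. The paper simply cites Szeg\H{o} for the determinant representation of orthogonal polynomials, whereas you spell out the Vandermonde/Andr\'eief mechanism explicitly, but the content is the same.
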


We note that the coefficients of $x^m$ in $P^\mathrm{J}_m(x,t)$ 
\begin{equation*}
(\tau_\mathrm{J})_m=\left|
 \begin{matrix}
 \tau_\mathrm{J} & \tau_\mathrm{J}^{(1)} & \cdots & \tau_\mathrm{J}^{(m-1)}
  \\
  \tau_\mathrm{J}^{(1)} & \tau_\mathrm{J}^{(2)} & \cdots & \tau_\mathrm{J}^{(m)}
  \\
  \vdots &  &  & \vdots
  \\
  \tau_\mathrm{J}^{(m-1)} & \cdots & \cdots & \tau_\mathrm{J}^{(2m-2)}
  \end{matrix}\right|
\end{equation*}
are tau functions of 
the Painlev\'e equations $\mathrm{P_J}$ $(\mathrm{J=II, III, IV, V, VI})$ (for example, see \cite{FW}, \cite{Masuda} and references therein). 
This is because if $\hbar=1$, then the Schr\"odinger equations for the quantum Painlev\'e equations are related to isomonodromy deformations for 
the Painlev\'e equations  \cite{Suleimanov 1}.  

As mentioned above, the quantum Painlev\'e equations are related to 
 two dimensional conformal field theory. It is known that 
the Knizhnik-Zamolodchikov (KZ) equation can be viewed as a quantization of the Schlesinger equations, 
which is isomonodromy deformation with regular singularities 
\cite{H}, \cite{R}. 
On the other hand, the sixth Painlev\'e equation is derived from the 
Schlesinger equation and the other Painlev\'e equations are derived from 
the irregular Schlesinger equations \cite{JMU}. 
We give a representation theoretic correspondence between the 
(confluent) KZ equations and the Schr\"odinger equations for the quantum Painlev\'e equations. 

Let $M_1$, $M_2$, and $M_3$ be Verma modules with highest weights $\gamma^{(1)}_0$, $\gamma^{(2)}_0$, and 
$\gamma^{(3)}_0$ for $\frak{sl}_2$.  
We consider the KZ equation on 
the tensor product of three Verma modules $M=M_1\otimes M_2 \otimes M_3$ 
with three points $0$, $t$, $1$. Let  
 $H_\mathrm{KZ}$ the Hamiltonian of the KZ equation. 
 \begin{thm}\label{thm intro kz}
 For $\gamma^{(i)}_0\not\in\Z$ ($i=1,2,3$) and $m\in\mathbb{Z}_{\ge 0}$,  
the action of $H_\mathrm{KZ}$ 
on  the space of singular vectors of weight $\sum_{i=1}^3 \gamma^{(i)}_0-2m$ is 
equivalent to the action of the quantized Hamiltonian $\widehat{H}_\mathrm{{VI}}$ on the subspace 
$\oplus_{i=0}^m\mathbb{C}x^i$  of the polynomial ring $\mathbb{C}[x]$ with $a=m\hbar$.   
\end{thm}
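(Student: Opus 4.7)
The plan is to realize both actions on concrete polynomial spaces and to identify them via an explicit change of variables and gauge. I would begin by realizing each Verma module $M_i$ on $\C[y_i]$ through the differential operator representation
\begin{equation*}
e_i = \partial_{y_i},\qquad h_i = \gamma^{(i)}_0-2y_i\partial_{y_i},\qquad f_i = \gamma^{(i)}_0 y_i - y_i^2\partial_{y_i},
\end{equation*}
with highest weight vector $1\in\C[y_i]$. Then $M = M_1\otimes M_2\otimes M_3 \cong \C[y_1,y_2,y_3]$ with diagonal action, and the singular vector condition $(e_1+e_2+e_3)v = 0$ combined with the weight condition $h\cdot v = (\sum_i\gamma^{(i)}_0 - 2m)v$ picks out the polynomials that are homogeneous of degree $m$ in the translation-invariant differences $u := y_1-y_2$ and $v := y_2-y_3$. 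This is an $(m+1)$-dimensional subspace, and the hypothesis $\gamma^{(i)}_0 \notin \Z$ guarantees irreducibility of each $M_i$ and that no accidental reductions occur.

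Next, I would use the rational coordinate $x := u/v = (y_1-y_2)/(y_2-y_3)$ together with the multiplicative gauge $v^{-m}$ to identify a singular vector $v^m Q(x)$ with the polynomial $Q(x) \in \bigoplus_{i=0}^m\C x^i$. The KZ Hamiltonian at the three points $0,t,1$ is
\begin{equation*}
H_{\mathrm{KZ}} = \frac{\Omega_{12}}{t} + \frac{\Omega_{23}}{t-1},\qquad \Omega_{ij} = e_if_j + f_ie_j + \tfrac{1}{2}h_ih_j,
\end{equation*}
a second-order differential operator on $\C[y_1,y_2,y_3]$. Since each $\Omega_{ij}$ commutes with $e_1+e_2+e_3$, the operator $H_{\mathrm{KZ}}$ preserves the singular vector subspace; the goal is then to show that, under the identification above, its restriction matches $\widehat{H}_{\mathrm{VI}}(x,\hbar\partial/\partial x,t)$ with $a = m\hbar$ for an appropriate dictionary between the $\gamma^{(i)}_0$ and the Painlev\'e VI parameters in Definition \ref{def hamiltonian}.

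The main obstacle is this explicit identification of operators. Concretely, one must rewrite $H_{\mathrm{KZ}}$ in the coordinates $(x, v, \ldots)$, conjugate by $v^m$ to strip the $v$-factor, and verify coefficient-by-coefficient (in $\partial_x^2$, $\partial_x$, and the multiplication part) that the result equals $\widehat{H}_{\mathrm{VI}}$; since any reordering of $x$ and $\partial/\partial x$ can be absorbed into a shift of $a$, as the authors observe after equation \eqref{eq sch}, a modest normalization freedom is available to make the parameter matching come out cleanly. I expect the matching to be forced by a consistency check: the integrand $\prod_{i<j}(u_i-u_j)^{2\hbar}\prod_i \rho_{\mathrm{VI}}(u_i)(x-u_i)$ appearing in the unnumbered Theorem coincides (up to the gauge) with the Schechtman--Varchenko master function for hypergeometric KZ solutions valued in $M$, so the two Hamiltonians annihilate the same $(m+1)$-dimensional family of integrals, and by a dimension count this uniquely determines both the isomorphism and the operator identification.
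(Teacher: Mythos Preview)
Your approach is essentially the paper's: realize each Verma module on $\C[x_i]$ via the standard differential operators, identify $W_m$ with degree-$m$ homogeneous polynomials in two difference variables, set up an explicit linear isomorphism to $\bigoplus_{i=0}^m\C x^i$, and then verify the operator identity by direct computation. The paper's proof of the detailed version (Theorem~\ref{thm kz}) is literally ``It follows from direct computations.''

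Two minor differences are worth flagging. First, the paper takes the isomorphism $T_m\colon (x_1-x_2)^i(x_1-x_3)^{m-i}\mapsto x^i$, whereas your coordinate $x=u/v$ with $u=y_1-y_2$, $v=y_2-y_3$ amounts to $(y_1-y_2)^i(y_2-y_3)^{m-i}\mapsto x^i$; since $x_1-x_3=u+v$, the two bases differ by an upper-triangular change, so your computation will land on $\widehat{H}_{\mathrm{VI}}$ only after a fractional-linear substitution in $x$ or with a parameter dictionary different from \eqref{eq kz 1}--\eqref{eq kz 2}. Second, the paper makes explicit that the equivalence is up to an additive scalar function of $t$: one must pass from $H_{\mathrm{KZ}}$ to a shifted $\widetilde{H}_{\mathrm{KZ}}(m)$ and add $a(b+c+d+\hbar)/(t-1)$ on the $\widehat{H}_{\mathrm{VI}}$ side. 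Your ``modest normalization freedom'' anticipates this, but note that the shift depends on $m$ and is not fully absorbed by the ordering ambiguity of $x$ and $\partial_x$. Your closing consistency argument via the Schechtman--Varchenko integrals is a good heuristic for guessing the parameter match, but it does not by itself pin down the operator identity and cannot replace the direct computation.
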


In  the cases of $\mathrm{J}=\mathrm{II,III,IV,V}$, we consider the confluent KZ 
equations  
defined in \cite{JNS}.  
\begin{thm}\label{thm intro ckz}
For $m\in\mathbb{Z}_{\ge 0}$,  
the actions of Hamiltonians of the certain confluent KZ equations are equivalent to the actions 
of the quantized Hamiltonians $\widehat{H}_\mathrm{J}$ ($\mathrm{J}=\mathrm{II,III,IV,V}$) with 
$a=m\hbar$. 
\end{thm}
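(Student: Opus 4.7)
The plan is to reduce Theorem \ref{thm intro ckz} to the same sort of explicit matching already carried out in Theorem \ref{thm intro kz}, but now performed separately for each confluent type $\mathrm{J}\in\{\mathrm{II,III,IV,V}\}$. For each such $\mathrm{J}$ one first has to identify, following \cite{JNS}, the correct tensor product $M^{\mathrm{J}}$ of (possibly confluent) Verma modules over $\slt$ on which the relevant confluent KZ Hamiltonian $H^{\mathrm{J}}_{\mathrm{cKZ}}$ acts; the choice is dictated by the confluence diagram for the Painlev\'e equations, i.e.\ one successively replaces a regular singularity by an irregular one of increasing Poincar\'e rank as one passes $\mathrm{VI}\rightsquigarrow\mathrm{V}\rightsquigarrow\mathrm{IV,III}\rightsquigarrow\mathrm{II}$.

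Next, I would fix a total weight $\sum_i\gamma^{(i)}_0-2m$ and consider the space $\mathrm{Sing}_{\sum_i\gamma^{(i)}_0-2m}(M^{\mathrm{J}})$ of singular vectors of that weight. Under the genericity condition on the highest weights of the Verma-type factors (the analogue of $\gamma^{(i)}_0\notin\Z$), this space is $(m+1)$-dimensional and admits a canonical basis $\{v_0,v_1,\dots,v_m\}$ indexed by the number of lowering operators applied at a chosen puncture. The key linear isomorphism I would set up is
\begin{equation*}
\iota\colon \mathrm{Sing}_{\sum_i\gamma^{(i)}_0-2m}(M^{\mathrm{J}})\;\xrightarrow{\;\sim\;}\;\bigoplus_{i=0}^{m}\C x^{i},\qquad v_i\longmapsto c_i\,x^{i},
\end{equation*}
with normalization constants $c_i$ chosen so that the action of the lowering and Cartan generators at the chosen puncture is intertwined with multiplication by $x$ and an Euler-type operator in $\hbar\,\partial/\partial x$. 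This is exactly the substitution ``$q\mapsto x$, $p\mapsto\hbar\partial/\partial x$'' underlying Definition \ref{def hamiltonian}.

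With this dictionary in place, the theorem reduces to a direct computation: write $H^{\mathrm{J}}_{\mathrm{cKZ}}$ in terms of the Casimir-type quadratic expressions in the generators at the relevant pair of punctures (as in \cite{JNS}), restrict to the singular subspace, transport through $\iota$, and verify that the result matches $\widehat{H}_{\mathrm{J}}$ with $a=m\hbar$ term by term. The parameter $a$ will emerge from a combination of the highest weights of the non-confluent factors together with the $-2m$ shift, which is precisely why $a=m\hbar$ appears; one should also check that operator orderings are absorbed into $a$, as remarked after the definition of $\widehat{H}_{\mathrm{II}}$.

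The main obstacle I expect is the confluent case itself: unlike the VI case, the Hamiltonian $H^{\mathrm{J}}_{\mathrm{cKZ}}$ contains contributions from irregular parts (shift operators, twisted Casimirs at irregular punctures) whose restriction to $\mathrm{Sing}$ requires the careful bookkeeping introduced in \cite{JNS}. In particular, the inversion problems that obstructed \cite{JNS} from producing quantum Painlev\'e solutions directly must be shown, in the present setting, not to appear on the singular-vector subspace, so that the equivalence of actions is genuinely well-defined. Once this is verified case by case for $\mathrm{J}=\mathrm{II,III,IV,V}$, the rest is a bounded polynomial identity in $x$ and $\hbar\partial/\partial x$.
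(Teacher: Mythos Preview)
Your outline has a genuine conceptual gap: you propose to restrict to the space of singular vectors $\mathrm{Sing}_{\sum_i\gamma^{(i)}_0-2m}(M^{\mathrm{J}})$, exactly as in the $\mathrm{VI}$ case. But the paper explicitly remarks after the statement of Theorem~\ref{thm intro ckz} that \emph{``in these cases, we do not consider the space of singular vectors''}, and this is not a cosmetic difference. In the confluent settings the correct space is the \emph{full weight space} $M_m$ of weight $\sum_i\gamma^{(i)}_0-2m$, which already has dimension $m+1$ because the confluent Verma factor(s) restrict the number of independent lowering directions (for instance, for $\mathrm{J}=\mathrm{V}$ the factor $M'(\gamma^{(\infty)}_1)$ is one-dimensional, so only $f^{(1)}$ and $f^{(2)}$ lower; for $\mathrm{J}=\mathrm{II}$ only $(f\otimes z)^{(\infty)}$ and $(f\otimes z^2)^{(\infty)}$ contribute). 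Imposing a singular-vector condition would either be ill-defined (there is no natural total $e$ in the purely irregular case $\mathrm{J}=\mathrm{II}$) or would cut the space down too far (e.g.\ for $\mathrm{J}=\mathrm{V}$, singular vectors for $e^{(1)}+e^{(2)}$ in $M(\gamma^{(1)}_0)\otimes M(\gamma^{(2)}_0)$ at weight $-2m$ below top form a one-dimensional space, not an $(m+1)$-dimensional one).

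A second, related issue is the genericity hypothesis you invoke (``the analogue of $\gamma^{(i)}_0\notin\Z$''). In the paper's treatment of $\mathrm{J}=\mathrm{II,III,IV,V}$ no such hypothesis is needed precisely because one works with the full weight space; your argument as written would be artificially restricted. The remainder of your plan---choosing an explicit basis of the $(m+1)$-dimensional space, defining $T_m$ to $\bigoplus_{i=0}^m\C x^i$, and checking the intertwining by direct computation---is correct in spirit and is exactly what the paper does in Theorems~\ref{thm ckz v}--\ref{thm ckz ii}; but the basis vectors used there (e.g.\ $(f^{(1)})^i(f^{(1)}+f^{(2)})^{m-i}\mathbf{1}_\gamma$ for $\mathrm{J}=\mathrm{V}$) span $M_m$, not a singular subspace.
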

In these cases, we do not consider the space of singular vectors.

The remainder of this paper is organized as follows. In section 2, we present integral formulas 
and prove that the integral formulas are solutions to the Schr\"odinger equations for the quantum Painlev\'e equations. 
Moreover, we give determinant formulas for solutions. 
In section 3, we recall the definition of the 
(confluent) KZ equations corresponding to the Schr\"odinger equations for the quantum Painlev\'e equations 
and give a representation theoretic correspondence 
between the (confluent) KZ equations and the Schr\"odinger equations for the quantum Painlev\'e equations. 
We also give a symmetry of the Schr\"odinger equations for the quantum Painlev\'e equations with respect to $\hbar \to -\hbar$ in 
quantized Hamiltonians.

%%%%%%%%%%%%%%%%%%%%%%%%%%%%%%%%%%%%%%%%%%%%%%%%%%%%%%%
%%%%%%%%%%%%%%%%%%%%%%%%%%%%%%%%%%%%%%%%%%%%%%%%%%%%%%%
%%%%%%%%%%%%%%%%%%%%%%%%%%%%%%%%%%%%%%%%%%%%%%%%%%%%%%%
\section{Integral formula}
In this section, we present integral formulas taking values in $\bigoplus_{i=0}^m \mathbb{C}x^i$ ($m\in\Z_{\ge 0} $) and 
show that they are solutions to the quantum Painlev\'e equations.
%%%%%%%%%%%%%%%%%%%%%%%%%%%%%%%%%%%%%%%%%%%%%%%%%%%%%%%

\begin{dfn}\label{def hamiltonian}
Quantized Hamiltonians 
$\widehat{H}_\mathrm{J}$ are defined as
\begin{align*}
&t(t-1)\widehat{H}_{\mathrm{VI}}\left(
x, \hbar \frac{\partial}{\partial x}, a, b, c, d,  t
\right)=x(x-1)(x-t)\left(\hbar\frac{\partial}{\partial x}\right)^2
\\&
-\left(
(a+b)(x-1)(x-t)+cx(x-t)
+dx(x-1)
\right)\hbar\frac{\partial}{\partial x}
+(b+c+d+\hbar)a (x-t) , 
\\
&t\widehat{H}_\mathrm{V}\left(
x, \hbar \frac{\partial}{\partial x}, a, b, c,  t
\right)=x(x-1)\left(\hbar \frac{\partial}{\partial x}\right)^2
+\left(tx^2 -(b+c+t)x+b\right)\hbar\frac{\partial}{\partial x}
\\
&+
a(b+c-a+\hbar +t-tx),
\\
&\widehat{H}_\mathrm{IV}\left(
x, \hbar \frac{\partial}{\partial x}, a, b, t
\right)=x\left(\hbar\frac{\partial}{\partial x}\right)^2+(x^2 -tx -b)
\hbar\frac{\partial}{\partial x} -
 ax-(a-2b)t, 
\\
&t\widehat{H}_\mathrm{III}\left(
x, \hbar \frac{\partial}{\partial x}, a, b, t
\right)= x^2\left(\hbar\frac{\partial}{\partial x}\right)^2 -\left(x^2+bx +t\right)\hbar\frac{\partial}{\partial x}
+ ax,
\\
&\widehat{H}_\mathrm{II}\left(
x, \hbar \frac{\partial}{\partial x}, a, t
\right)=\frac{1}{2}\left(\hbar\frac{\partial}{\partial x}\right)^2
-\left(x^2+\frac{t}{2}\right)\hbar
\frac{\partial}{\partial x}+ ax, 
\end{align*}
where $a,b,c,d\in\mathbb{C}$. 
\end{dfn}
%%%%%%%%%%%%%%%%%%%%%%%%%%%%%%%%%%%%%%%%%%%%%%%%%%%%%%%
\begin{re}
The spectral problem of 
the Hamiltonian $\widehat{H}_{\mathrm{VI}}$ 
is the Heun equation, which is equivalent to the $BC_1$ Inozemtsev model. 
This was known as the
Painlev\'e-Calogero correspondence \cite{LO}, \cite{T}.

\end{re}

As mentioned in Introduction, the Schr\"odinger equations for the quantum Painlev\'e equations have polynomial solutions in $x$ 
because of the following Proposition. 
\begin{prop}\label{prop a}
The quantized Hamiltonians $\widehat{H}_\mathrm{J}$ act on 
$\bigoplus_{i=0}^m \mathbb{C}x^i$ ($m\in\Z_{\ge 0} $) if and only if 
$$
\left\{
\begin{tabular}{cc}
$a=m\hbar$ \ or \  $ b+c+d=(m-1)\hbar$ & $\mathrm{J=VI}$, \\ $ a=m\hbar$ & $\mathrm{J=II,III,IV,V}$. 
\end{tabular}\right.
$$
\end{prop}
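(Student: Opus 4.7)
The plan is to exploit the fact that each Hamiltonian $\widehat{H}_\mathrm{J}$ is a differential operator whose terms raise polynomial degree in $x$ by at most one, so that the only obstruction to preserving $\bigoplus_{i=0}^m \mathbb{C}x^i$ comes from the image of the top basis vector $x^m$.

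First, I would inspect Definition \ref{def hamiltonian} and verify that every monomial $c\,x^{\alpha}(\hbar\partial_x)^{\beta}$ appearing in $\widehat{H}_\mathrm{J}$ satisfies $\alpha-\beta\le 1$. Consequently $\widehat{H}_\mathrm{J}\cdot x^k$ is a polynomial of degree at most $k+1$; for $k<m$ this lies automatically in $\bigoplus_{i=0}^m \mathbb{C}x^i$, so the subspace is preserved if and only if the coefficient of $x^{m+1}$ in $\widehat{H}_\mathrm{J}\cdot x^m$ vanishes. This reduces the \emph{if and only if} statement to a single coefficient computation for each type $\mathrm{J}$.

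Next, for $\mathrm{J}=\mathrm{II,III,IV,V}$ only two kinds of terms can contribute to $x^{m+1}$: the leading $x^2$ of the polynomial multiplying $\hbar\partial_x$, applied to $\hbar m\,x^{m-1}$, together with the unique multiplication-by-$x$ term in the Hamiltonian. A direct substitution shows that the coefficient in question equals, up to a harmless nonzero scalar, $a-m\hbar$, so the vanishing condition is exactly $a=m\hbar$. For $\mathrm{J}=\mathrm{VI}$ the second-order term $x(x-1)(x-t)(\hbar\partial_x)^2$ also contributes an $x^{m+1}$-piece; adding all contributions gives
\begin{equation*}
\hbar^2 m(m-1)-(a+b+c+d)\hbar m+(b+c+d+\hbar)a.
\end{equation*}

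The only (mildly) nontrivial step is the algebraic identity
\begin{equation*}
\hbar^2 m(m-1)-(a+b+c+d)\hbar m+(b+c+d+\hbar)a=(\hbar m-a)\bigl(\hbar(m-1)-(b+c+d)\bigr),
\end{equation*}
which immediately delivers the two-alternative condition ``$a=m\hbar$ or $b+c+d=(m-1)\hbar$''. Everything else is routine monomial bookkeeping; the main risk is a sign or combinatorial slip in the $\mathrm{VI}$ computation, which is why I would double-check this factorization explicitly before declaring the proof complete.
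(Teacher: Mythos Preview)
Your proposal is correct and follows essentially the same approach as the paper: both reduce the question to computing $\widehat{H}_\mathrm{J}\,x^m$ and reading off the coefficient of $x^{m+1}$. You are in fact slightly more explicit than the paper---you justify why it suffices to test only the top monomial (via the degree bound $\alpha-\beta\le 1$), and you carry out the $\mathrm{J}=\mathrm{VI}$ case in full, including the factorization $(\hbar m-a)\bigl(\hbar(m-1)-(b+c+d)\bigr)$, whereas the paper treats only $\mathrm{J}=\mathrm{II}$ in detail and leaves the rest to ``the same way''.
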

%%%%%%%%%%%%%%%%%%%%%%%%%%%%%%%%%%%%%%%%%%%%%%%%%%%%%%%%%%%%%%%%%%%%%%%%%%%%%%%%%%%%%%%%%%%%%%%%%%%%%%%
\begin{proof}
It is sufficient to compute the action of $\widehat{H}_J$ on $x^m$. If  $\mathrm{J=II}$, then, 
\begin{equation*}
\widehat{H}_\mathrm{II}x^m=\left(
a-m\hbar
\right)x^{m+1}-\frac{t}{2}m\hbar x^m+\frac{1}{2}m(m-1)\hbar^2x^{m-2}. 
\end{equation*}
Hence, $\widehat{H}_\mathrm{II}$ acts on $\bigoplus_{i=0}^m \mathbb{C}x^i$ if and only if $a=m\hbar$. 

For the other cases, we can prove Proposition \ref{prop a} in the same way. 
\end{proof}
%%%%%%%%%%%%%%%%%%%%%%%%%%%%%%%%%%%%%%%%%%%%%%%%%%%%%%%

From Proposition \ref{prop a}, the Schr\"odinger equations for  the quantum Painlev\'e equations have polynomial solutions 
$\Phi^\mathrm{J}_m(x,t)=\sum_{i=0}^m\varphi_i(t)x^i$ and then become linear differential systems of $\varphi_i(t)$. We show that 
  polynomial solutions $\Phi^\mathrm{J}_m(x,t)$ are expressed in terms of  integral formulas of hypergeometric type. 

%%%%%%%%%%%%%%%%%%%%%%%%%%%%%%%%%%%%%%%%%%%%%%%%%%%%%%%%%%%%%%%%%%%%%%%%%%%%%%%%%%%%%%%%%%%%%%%%%%%%%
\begin{dfn}\label{def master function}
We define 
the master functions  $\rho_\mathrm{J}$  as follows. 
\begin{align*}
&\rho_\mathrm{VI}(u,t,a,b,c,d)=u^{-a-b-1}(1-u)^{-c-1}(t-u)^{-d},
\\
&\rho_\mathrm{V}( u,t,b,c)=u^{-b-1}(1-u)^{-c-1}\exp(ut),
\\
&\rho_\mathrm{IV}(u,t,b)=u^{-b-1}\exp\left(-
ut+\frac{u^2}{2}
\right),
\\
&\rho_\mathrm{III}(u,t,b)=u^{-b-1}\exp\left(
\frac{t}{u}-u
\right),
\\
&\rho_\mathrm{II}(u,t)=\exp\left(
-\left(
ut+\frac{2}{3}u^3
\right)
\right).
\end{align*}
 
\end{dfn}
%%%%%%%%%%%%%%%%%%%%%%%%%%%%%%%%%%%%%%%%%%%%%%%%%%%%%%%

%Let $f(u_1,\ldots, u_m)$ be a function of variables $u_1$, \ldots, $u_m$ and denote  
% by $\mathop{\mathrm{Sym}}\limits_{u_1,\ldots,u_m}\left[f(u_1,\ldots, u_m)\right]$ its symmetrization given by  
%\begin{equation*}
%\mathop{\mathrm{Sym}}_{u_1,\ldots,u_m}\left[f(u_1,\ldots, u_m)\right]
%=
%\sum_{\sigma\in \frak{S}_m}
%f(u_{\sigma(1)},\ldots,u_{\sigma(m)}), 
%\end{equation*}
%where $\frak{S}_m$ be the symmetric group of degree $m$. 

%%%%%%%%%%%%%%%%%%%%%%%%%%%%%%%%%%%%%%%%%%%%%%%%%%%%%%%

\begin{thm}\label{thm HS}
For $m\in\Z_{\ge 0}$, 
% and with an appropriate choice of cycles $\Gamma_m^J$, 
the integral formulas 
\begin{equation*}
\Phi^\mathrm{J}_m(x,t,{\bf a},\hbar)=\int_{\Gamma_m^J}\prod_{1\le i<j\le m}(u_i-u_j)^{2\hbar}\prod_{i=1}^m
\rho_\mathrm{J}(u_i,t,{\bf a})(x-u_i)du_i, 
\end{equation*} 
where $\Gamma_m^J$ is an $m$-cycle of the homology group determined by the integrand 
and
\begin{equation*}
{\bf a}=\left\{
\begin{tabular}{cc}
$(a,b,c,d)$  & $\mathrm{J=VI}$, 
\\ 
$ (b,c)$ & $\mathrm{J=V}$,  
\\
$ b$ & $\mathrm{J=III, IV}$, 
\end{tabular}
\right.
\end{equation*}
are solutions to   the Schr\"odinger equations for the 
quantum Painlev\'e equations of type  $\mathrm{J=II,III,IV,V,VI}$ with 
$a=m\hbar $ or of type $\mathrm{VI}$ with $b+c+d=(m-1)\hbar$.
 \end{thm}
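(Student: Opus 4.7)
The plan is to verify the theorem uniformly by applying Stokes' theorem: I would show that for each type $\mathrm{J}$, the operator $\hbar\,\partial/\partial t - \widehat{H}_\mathrm{J}$ applied to the full integrand produces a sum of partial derivatives with respect to the integration variables $u_1,\ldots,u_m$, so that the expression vanishes after integration over the twisted cycle $\Gamma_m^\mathrm{J}$ (whose defining property is precisely that boundary contributions from the multivalued integrand cancel).

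First I would introduce
\begin{equation*}
F = \prod_{1\le i<j\le m}(u_i-u_j)^{2\hbar}\prod_{i=1}^m \rho_\mathrm{J}(u_i,t,\mathbf{a})(x-u_i)
\end{equation*}
and record the logarithmic derivatives
\begin{equation*}
\frac{1}{F}\frac{\partial F}{\partial u_k} = \sum_{j\ne k}\frac{2\hbar}{u_k-u_j} + \frac{\partial_u\rho_\mathrm{J}(u_k)}{\rho_\mathrm{J}(u_k)} - \frac{1}{x-u_k},
\end{equation*}
\begin{equation*}
\hbar\frac{\partial F}{\partial x} = \hbar F \sum_{k=1}^m \frac{1}{x-u_k},\qquad \hbar\frac{\partial F}{\partial t} = \hbar F \sum_{k=1}^m \frac{\partial_t\rho_\mathrm{J}(u_k)}{\rho_\mathrm{J}(u_k)}.
\end{equation*}
Applying $(\hbar\partial_x)^2$ then produces self-terms $-1/(x-u_k)^2$ and cross-terms $\sum_{k\ne l}1/((x-u_k)(x-u_l))$; I would expand the latter via
\begin{equation*}
\frac{1}{(x-u_k)(x-u_l)} = \frac{1}{u_k-u_l}\left(\frac{1}{x-u_k}-\frac{1}{x-u_l}\right),
\end{equation*}
so that, after symmetrizing in $k \leftrightarrow l$, they assemble into exactly the Vandermonde contribution $\sum_{j\ne k}2\hbar/(u_k-u_j)$ appearing in the logarithmic derivative above. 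This is the point of the exponent $2\hbar$ on $(u_i-u_j)$.

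Next I would compute $\widehat{H}_\mathrm{J} F$ term by term, handling the polynomial-coefficient pieces via the splitting $x^s/(x-u_k) = u_k^s/(x-u_k) + (\text{polynomial in }x\text{ of degree }s-1)$. The polynomial-in-$x$ residues from this splitting must match the contribution of $\hbar\partial_t F$ (for instance in type $\mathrm{II}$ the $-u_k^2$ terms from $-x^2\hbar\partial_x$ cancel against $-u_k^2$ coming from $\partial_t \rho_\mathrm{II}$), while the residual rational-in-$x$ pieces should combine with the $(\hbar\partial_x)^2$ terms analyzed above and assemble into
\begin{equation*}
\left(\hbar\frac{\partial}{\partial t}-\widehat{H}_\mathrm{J}\right)F = \sum_{k=1}^m \frac{\partial}{\partial u_k}\bigl(G_k^\mathrm{J}(x,t,u_k)\,F\bigr)
\end{equation*}
for explicit $G_k^\mathrm{J}$ dictated by the Hamiltonian. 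A natural ansatz, guided by the highest-degree coefficient in $\widehat{H}_\mathrm{J}$, is $G_k^\mathrm{II} \propto (x-u_k)$, $G_k^\mathrm{III}\propto u_k(x-u_k)$, $G_k^\mathrm{VI}\propto u_k(u_k-1)(u_k-t)(x-u_k)^{-1}$ (rescaled appropriately), with the precise form pinned down by matching the leading singularities at $u_k=u_l$ and at $u_k=x$.

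The main obstacle is the bookkeeping for type $\mathrm{VI}$, where $(\hbar\partial_x)^2$ has a cubic coefficient and $\hbar\partial_x$ a sum of three quadratics, producing many more partial-fraction terms that must be reorganized. In every case, the constraint $a=m\hbar$ (resp.\ $b+c+d=(m-1)\hbar$ in the alternative type-$\mathrm{VI}$ branch) plays the decisive role: it is exactly the condition identified in the proof of Proposition \ref{prop a} as ensuring that the coefficient of the would-be $x^{m+1}$ term vanishes, and without this vanishing no total-$u$-derivative representation can exist, since every $\partial/\partial u_k(G_k^\mathrm{J} F)$ has $x$-degree at most $m$. The alternative $\mathrm{VI}$ branch $b+c+d=(m-1)\hbar$ is handled by the same argument after the $x \leftrightarrow u_k$ type symmetry, reflecting the well-known symmetry of the Heun operator between the local exponents at $x$ and the positions $u_k$.
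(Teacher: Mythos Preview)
Your strategy is sound and takes a genuinely different route from the paper. The paper expands $\Phi_m^\mathrm{J}$ as a polynomial in $x$ with coefficients $\langle\varphi_k\rangle_m$, where $\varphi_k=\mathop{\mathrm{Sym}}\bigl[\prod_{l=1}^{m-k}u_l\bigr]$, differentiates each $\langle\varphi_k\rangle_m$ in $t$, and reduces the resulting expressions (for $\mathrm{J=VI}$, integrals containing $d/(t-u_i)$) by a single symmetrized total-derivative identity, Lemma~\ref{lem HS}, obtained from $\bigl\langle\mathop{\mathrm{Sym}}\bigl[\nabla_i\bigl(u_1(1-u_1)\prod_{l=2}^{m-k}u_l\bigr)\bigr]\bigr\rangle_m=0$; it then matches coefficient by coefficient with $\widehat{H}_\mathrm{VI}\Phi$. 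Your version keeps $x$ inside the integrand and looks for one identity $(\hbar\partial_t-\widehat{H}_\mathrm{J})F=\sum_k\partial_{u_k}(G_kF)$. This is conceptually tidier and treats all $\mathrm{J}$ on the same footing, whereas the paper's argument makes the three-term recursion for $\langle\varphi_k\rangle_m$ explicit, which is what feeds directly into the KZ correspondence of Section~\ref{sec kz}.

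Two points need correcting. First, your ansatz for $G_k^\mathrm{II}$ is inverted: since the coefficient of $(\hbar\partial_x)^2$ in $\widehat{H}_\mathrm{II}$ is the constant $\tfrac12$, the correct choice is $G_k^\mathrm{II}=-\hbar/\bigl(2(x-u_k)\bigr)$, so that $G_kF=-\tfrac{\hbar}{2}\,W\prod_{l\ne k}(x-u_l)$ with $W$ the weight; the uniform pattern is $G_k^\mathrm{J}\propto q_\mathrm{J}(u_k)/(x-u_k)$ with $q_\mathrm{J}$ the polynomial multiplying $(\hbar\partial_x)^2$, which is exactly what you wrote for $\mathrm{VI}$ but not for $\mathrm{II}$ or $\mathrm{III}$. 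Second, the alternative $\mathrm{VI}$ branch $b+c+d=(m-1)\hbar$ does not come from an $x\leftrightarrow u_k$ symmetry. The actual reason both branches work with the same $\Phi_m^\mathrm{VI}$ is that under either constraint one has $(b+c+d+\hbar)a=m\hbar\bigl((a+b)+c+d-(m-1)\hbar\bigr)$, so $\widehat{H}_\mathrm{VI}$ restricted to $\bigoplus_{i=0}^m\mathbb{C}x^i$ depends only on the combinations $a+b,c,d$ already present in $\rho_\mathrm{VI}$. With these two fixes your argument goes through.
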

%%%%%%%%%%%%%%%%%%%%%%%%%%%%%%%%%%%%%%%%%%%%%%%%%%%%%%%

%\begin{re}
%The integrands in Theorem \ref{thm HS} are instances of $\beta$-ensembles in random matrix theory. In 
%\cite{E}, where arbitrary $\beta$ two-matrix models were studied, the authors pointed out that the 
%``angular integral" in the matrix integral, the generalization of the Itzykson-Zuber integral, satisfy 
%the quantum Calogero-Moser equation. 
%It would be an interesting problem to
%\end{re}

%%%%%%%%%%%%%%%%%%%%%%%%%%%%%%%%%%%%%%%%%%%%%
 Let  singular loci $U_\mathrm{J}\subset\C^m$ 
($\mathrm{J=II,III,IV,V,VI}$) be defined as 
\begin{align*}
&U_\mathrm{VI}=\bigcup_{1\le i<j\le m} \{ u_i=u_j\}\cup \bigcup_{1\le i\le m}\{u_i=0\}\cup 
\bigcup_{1\le i\le m}\{u_i=1\}\cup \bigcup_{1\le i\le m}\{u_i=t\},
\\
&U_\mathrm{V}=\bigcup_{1\le i<j\le m} \{ u_i=u_j\}\cup \bigcup_{1\le i\le m}\{u_i=0\}\cup 
\bigcup_{1\le i\le m}\{u_i=1\},
\\
&U_\mathrm{IV}=U_\mathrm{III}=\bigcup_{1\le i<j\le m} \{ u_i=u_j\}\cup \bigcup_{1\le i\le m}\{u_i=0\},
\\
&U_\mathrm{II}=\bigcup_{1\le i<j\le m} \{ u_i=u_j\}.
\end{align*}
%%%%%%%%%%%%%%%%%%%%%%%%%%%%%%%%%%%%%%%%%%%%%%%%%%%%%%
%Note here that 
%c whose poles are in $U_\mathrm{J}$,  we have 
%\begin{equation}\label{eq assum zero}
%\int_{\Gamma_m^J}\prod_{j=1}^m du_j \frac{\partial}{\partial u_i}\left(\prod_{1\le i<j\le m}(u_i-u_j)^{2\hbar}\prod_{i=1}^m
%\rho_\mathrm{J}(u_i, {\bf a}) \varphi(u_1,\ldots,u_m)\right)=0, 
%\end{equation}
%since $\Gamma_m^J$ is an $m$-cycle 
%%%%%%%%%%%%%%%%%%%%%%%%%%%%%%%%%%%%%%%%%%%%%%%%%%%%%%

For a rational function $\varphi(u_1,\ldots,u_m)$ holomorphic outside  $U_\mathrm{J}$, denote by $\langle \varphi(u_1,\ldots,u_m) \rangle_m$ an integral formula
\begin{equation*}
\int_{\Gamma_m^J} \prod_{1\le i<j\le m}(u_i-u_j)^{2\hbar}\prod_{i=1}^m
\rho_\mathrm{J}(u_i, {\bf a}) \varphi(u_1,\ldots,u_m) du_i. 
\end{equation*}
Let $\nabla_i$ ($i=1,\ldots, m$) be the differential defined by 
\begin{equation*}
\nabla_i=\frac{\partial}{\partial u_i}+\frac{\partial}{\partial u_i}
\left(\log\left(\prod_{1\le i<j\le m}(u_i-u_j)^{2\hbar}\prod_{i=1}^m
\rho_\mathrm{J}(u_i,t,{\bf a})\right)\right). 
\end{equation*}
 By definition, it holds
 \begin{equation*}
\left\langle
\nabla_i\left(\varphi(u_1,\ldots, u_m)\right)\right\rangle
=\int_{\Gamma_m^J}\prod_{j=1}^m du_j \frac{\partial}{\partial u_i}
\left(\prod_{1\le i<j\le m}(u_i-u_j)^{2\hbar}\prod_{i=1}^m
\rho_\mathrm{J}(u_i, {\bf a}) \varphi(u_1,\ldots,u_m)\right)
=0.
\end{equation*}

Note that the polynomial $\prod_{i=1}^m (x-u_i)$ in the integral formula 
is symmetrical with respect to the integral variables $u_1, \ldots,u_m$. 
It is convenient to use the symmetrization for computations of the integral formula. 
Let  $\mathop{\mathrm{Sym}}\left[\varphi(u_1,\ldots, u_m)\right]$ be the symmetrization given by  
\begin{equation*}
\mathop{\mathrm{Sym}}\left[\varphi(u_1,\ldots, u_m)\right]
=\frac{1}{m!}
\sum_{\sigma\in \frak{S}_m}
\varphi(u_{\sigma(1)},\ldots,u_{\sigma(m)}), 
\end{equation*}
where $\frak{S}_m$ is the symmetric group of degree $m$.

\medskip

%%%%%%%%%%%%%%%%%%%%%%%%%%%%%%%%%%%%%%%%%%%%%%%%%%%%%%%%%%%%%%%%%%%%%%%%%%%%%%%t
{\bf A proof of Theorem \ref{thm HS} for the case of $\mathrm{J=VI}$. }
Let $\Phi^\mathrm{VI}_m(x,t,a,b,c,d,\hbar)$ be written as
\begin{equation*}
\Phi^\mathrm{VI}_m(x,t,a,b,c,d,\hbar)=\sum_{k=0}^m (-1)^{k+1}\begin{pmatrix}
m\\ k
\end{pmatrix}\left\langle\mathop{\mathrm{Sym}}\left[ \prod_{l=1}^{m-k} u_l \right]\right\rangle_m x^k, 
\end{equation*}
where $\begin{pmatrix}
m\\ k
\end{pmatrix}$ is the binomial coefficient. Denote $ \mathop{\mathrm{Sym}}\left[\prod_{l=1}^{m-k} u_l\right]$ by $\varphi_k$.   
 For  $ k=0,1,\ldots, m$, we have 
\begin{equation*}
\frac{\partial}{\partial t}\langle \varphi_k\rangle_m = 
(k-m)d\left\langle \mathop{\mathrm{Sym}}\left[\left(\frac{t}{t-u_{m-k}}-1\right)
\prod_{l=1}^{m-k-1}u_l
\right]\right\rangle_m-kd\left\langle \mathop{\mathrm{Sym}}\left[\frac{1}{t-u_{m-k+1}}\prod_{l=1}^{m-k}u_l\right]\right\rangle_m. 
\end{equation*}
Applying Lemma \ref{lem HS} below, we obtain 
\begin{align}
 t(t-1)\frac{\partial}{\partial t}\langle \varphi_k\rangle_m =& (m-k) t \left(
a+b-k\hbar
\right)\left\langle
\varphi_{k+1}
\right\rangle_m \label{eq varphi}
 \\
&  -\left\{(m-k) t (a+b+c+d  -(m+k-1)\hbar \right.\nonumber
\\
&\left.-k(a+b+d(1-t)-(k-1)\hbar))
\right\}\left\langle
\varphi_k
\right\rangle_m \nonumber
\\
&-k(a+b+c+d -(m+k-2)\hbar)\left\langle
\varphi_{k-1}
\right\rangle_m. \nonumber 
\end{align}
On the other hand, the action of the Hamiltonian $\widehat{H}_\mathrm{{VI}}$ on $\Phi^\mathrm{VI}_m(x,t,a,b,c,d,\hbar)$ 
is easily calculated and we see that 
the coefficient of $x^k$ of 
$$  
t(t-1)\frac{\widehat{H}_\mathrm{{VI}}}{\hbar } \Phi^\mathrm{VI}_m(x,t,a,b,c,d,\hbar)
$$
times $(-1)^{k+1}\begin{pmatrix}
m\\ k
\end{pmatrix}^{-1}$ coincides with the right hand side of \eqref{eq varphi}, which finishes the proof. \qed

%%%%%%%%%%%%%%%%%%%%%%%%%%%%%%%%%%%%%%%%%%%%%%%%%%%%%%%

\begin{lem}\label{lem HS}
For, $k=1,\ldots, m$, we have
\begin{align}
 t(t-1)\left\langle\mathop{\mathrm{Sym}}\left[
\frac{d}{t-u_{m-k}}\prod_{l=1}^{m-k-1}u_l\right]
\right\rangle_m
=&-\left(
a+b+d (1-t)-k\hbar
\right)\left\langle
\varphi_{k+1}
\right\rangle_m \label{eq lem HS}
\\ &
+\left(
a+b+c+d -(m+k-1)\hbar
\right)\left\langle
\varphi_k
\right\rangle_m. \nonumber
\end{align}
\end{lem}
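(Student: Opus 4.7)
The plan is to derive the identity from the vanishing relation $\langle\nabla_{m-k}\phi\rangle_m=0$ for a well-chosen rational test function $\phi$. Since the desired integrand on the left-hand side is $d\prod_{l=1}^{m-k-1}u_l/(t-u_{m-k})$, which is the product of $\prod_{l=1}^{m-k-1}u_l$ with the $1/(t-u_{m-k})$-piece of $\partial_{u_{m-k}}\log\rho_{\mathrm{VI}}$, I would look for a $\phi$ whose factors cancel the $1/u_{m-k}$ and $1/(1-u_{m-k})$ pieces of that logarithmic derivative but leave the $1/(t-u_{m-k})$ piece intact. This singles out
\[
\phi = (1-u_{m-k})\prod_{l=1}^{m-k}u_l = u_{m-k}(1-u_{m-k})\prod_{l=1}^{m-k-1}u_l,
\]
in which the factor $u_{m-k}$ clears the $u_{m-k}$ denominator and $(1-u_{m-k})$ clears the $(1-u_{m-k})$ denominator, while no factor $(t-u_{m-k})$ is present.

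Expanding $\nabla_{m-k}\phi$ produces the $\partial/\partial u_{m-k}$ term, the three logarithmic-derivative terms, and a Vandermonde sum $\sum_{j\neq m-k}2\hbar\phi/(u_{m-k}-u_j)$. The central polynomial manipulation is
\[
\frac{u_{m-k}(1-u_{m-k})}{t-u_{m-k}} = u_{m-k} + (t-1) - \frac{t(t-1)}{t-u_{m-k}},
\]
which rewrites $d\phi/(t-u_{m-k})$ as a polynomial combination of $\prod_{l=1}^{m-k}u_l$ and $\prod_{l=1}^{m-k-1}u_l$ minus $dt(t-1)\prod_{l=1}^{m-k-1}u_l/(t-u_{m-k})$, that is, precisely $t(t-1)$ times the integrand on the left-hand side of the lemma. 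The remaining two log-derivative terms and the $\partial/\partial u_{m-k}$ term are also polynomial combinations of the same two monomials; after applying $\langle\,\cdot\,\rangle_m$ and using $\varphi_k = \Sym[\prod_{l=1}^{m-k}u_l]$, $\varphi_{k+1} = \Sym[\prod_{l=1}^{m-k-1}u_l]$ via the symmetry identity $\langle F\rangle_m = \langle\Sym[F]\rangle_m$, I expect the polynomial pieces to combine into the coefficients $(a+b+c+d)$ in front of $\langle\varphi_k\rangle_m$ and $-(a+b+d(1-t))$ in front of $\langle\varphi_{k+1}\rangle_m$, matching the $\hbar$-free parts of the right-hand side.

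The main obstacle is the Vandermonde contribution. I would evaluate it by an antisymmetrization trick: for the transposition $\sigma=(j,m-k)$ the denominator $u_{m-k}-u_j$ changes sign, so invariance of $\Sym$ under precomposition with $\sigma$ gives $\Sym[\phi/(u_{m-k}-u_j)] = -\Sym[(\phi\circ\sigma)/(u_{m-k}-u_j)]$, whence
\[
2\,\Sym\!\left[\frac{\phi}{u_{m-k}-u_j}\right] = \Sym\!\left[\frac{\phi-\phi\circ\sigma}{u_{m-k}-u_j}\right],
\]
and the numerator is manifestly divisible by $u_{m-k}-u_j$, reducing the ratio to a polynomial. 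The evaluation splits according to whether $j\le m-k-1$ (so $u_j$ is a factor of $\prod_{l=1}^{m-k-1}u_l$) or $j\ge m-k+1$ (so it is not). Direct computation of $\phi-\phi\circ\sigma$ in the two cases should yield $-\tfrac{1}{2}\varphi_k$ per term and $\tfrac{1}{2}\varphi_{k+1}-\varphi_k$ per term; summing over the $m-k-1$ indices of the first type and the $k$ indices of the second type and multiplying by $2\hbar$ produces exactly the $\hbar$-corrections $k\hbar\langle\varphi_{k+1}\rangle_m-(m+k-1)\hbar\langle\varphi_k\rangle_m$ needed on the right-hand side of the lemma.
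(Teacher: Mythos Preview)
Your proposal is correct and is essentially the paper's own argument: the paper also uses the vanishing of $\langle\nabla_i[u_1(1-u_1)\prod_{l=2}^{m-k}u_l]\rangle_m$ (your $\phi$ up to the relabeling $1\leftrightarrow m-k$), extracts the $t(t-1)/(t-u)$ piece from the $\rho_{\mathrm{VI}}$ log-derivative, and handles the Vandermonde sum by the identical antisymmetrization trick $2\,\Sym[\phi/(u_i-u_j)]=\Sym[(\phi-\phi\circ\sigma)/(u_i-u_j)]$ split into the two index ranges. Your explicit per-term values $-\tfrac12\varphi_k$ and $\tfrac12\varphi_{k+1}-\varphi_k$ and the resulting totals match the paper's computation exactly.
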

%%%%%%%%%%%%%%%%%%%%%%%%%%%%%%%%%%%%%%%%%%%%%%%%%%%%%%
\begin{proof}
In order to prove \eqref{eq lem HS}, we compute 
\begin{equation*}
X=
\left\langle \mathop{\mathrm{Sym}}\left[\nabla_i\left(u_1(1-u_1)\prod_{l=2}^{m-k} u_l\right)\right]\right\rangle
=0
\end{equation*}
as follows. From the definition, we get 
\begin{align*}
X=&-(a+b+d(1-t))\left\langle
\varphi_{k+1}
\right\rangle_m
+(a+b+c+d )\left\langle
\varphi_k
\right\rangle_m
- t(t-1)\left\langle\mathop{\mathrm{Sym}}\left[
\frac{d}{t-u_{m-k}}\prod_{l=1}^{m-k-1}u_l\right]
\right\rangle_m
 \\
 &+2\hbar\left\langle\mathop{\mathrm{Sym}}\left[
\sum_{l=2}^m\frac{u_1(1-u_1)}{u_1-u_l}u_2\cdots u_{m-k}\right]
\right\rangle_m. 
\end{align*}
Since the symmetrization is invariant 
 under the action of $\frak{S}_m$ on the integral variables $u_1,\ldots, u_m$, we have
 \begin{align*}
2\left\langle\mathop{\mathrm{Sym}}\left[
\sum_{l=2}^m\frac{u_1(1-u_1)}{u_1-u_l}u_2\cdots u_{m-k}\right]
\right\rangle_m= & \left\langle\mathop{\mathrm{Sym}}\left[
\sum_{l=2}^m\frac{u_1(1-u_1)}{u_1-u_l}u_2\cdots u_{m-k}\right]
\right\rangle_m
\\
&
-\left\langle\mathop{\mathrm{Sym}}\left[
\sum_{l=2}^{m-k}\frac{u_l(1-u_l)}{u_1-u_l}u_1u_2\cdots u_{l-1}u_{l+1}\cdots u_{m-k}\right]
\right\rangle_m 
\\
&-\left\langle\mathop{\mathrm{Sym}}\left[
\sum_{l=m-k+1}^m\frac{u_l(1-u_l)}{u_1-u_l}u_2\cdots u_{m-k}\right]
\right\rangle_m
\\
=&k\left\langle
\varphi_{k+1}
\right\rangle_m 
-(m+k-1)\left\langle
\varphi_k
\right\rangle_m.
\end{align*}
  Therefore, we arrive at \eqref{eq lem HS}. 
\end{proof}
%%%%%%%%%%%%%%%%%%%%%%%%%%%%%%%%%%%%%%%%%%%%%%%%%%%%%%%

For the other cases, we can verify Theorem \ref{thm HS} in the similar way. 
It also follows 
from Theorem \ref{thm ckz v}, Theorem \ref{thm ckz iv},  
Theorem \ref{thm ckz iii}, Theorem \ref{thm ckz ii}  in section 3,   and Proposition 4.2 in \cite{JNS}. 
We give an another proof of Theorem \ref{thm HS} for the case of $\mathrm{J=V}$  in section 3.

%%%%%%%%%%%%%%%%%%%%%%%%%%%%%%%%%%%%%%%%%%%%%%%%%%%%%%%
%%%%%%%%%%%%%%%%%%%%%%%%%%%%%%%%%%%%%%%%%%%%%%%%%%%%%%%
As mentioned in Introduction, 
if $\hbar=1$ and $\Gamma_m^J=\prod_{i=1}^m \Gamma_1^J$, where $\Gamma^J_1$ is an appropriate cycle for 
$\Phi_1^\mathrm{J}(x,t)$, then the integral formulas  $\Phi_m^\mathrm{J}(x,t)$ ($m\in\Z_{\ge 0}$) are expressed by the 
determinant formulas $P_m^\mathrm{J}(x,t)$ \eqref{eq determinant formula} and $P_m^\mathrm{J}(x,t)$ are 
orthogonal polynomials 
in $x$ with respect to the weight functions $\rho_\mathrm{J}(x)$. Namely, it holds that 
\begin{equation*}
\int_{\Gamma^J_1} P_m^\mathrm{J}(x,t)P_n^\mathrm{J}(x,t)\rho_J(x)dx =0. 
\end{equation*}

%%%%%%%%%%%%%%%%%%%%%%%%%%%%%%%%%%%%%%%%%%%%%%%%%
From Theorem \ref{thm HS}, we have 

%%%%%%%%%%%%%%%%%%%%%%%%%%%%%%%%%%%%%%%%%%%%%%%%%%%%%%
\begin{cor}
If $\hbar=1$, then the determinant formulas $P_m^J(x,t)$ ($\mathrm{J=II, III, IV, V, VI}$) are  solutions to 
  the Schr\"odinger equations for the 
quantum  Painlev\'e equations of type $\mathrm{II, III, IV, V, VI}$ with $a=m$ or of type $\mathrm{VI}$ with 
$b+c+d=m-1$. 
\end{cor}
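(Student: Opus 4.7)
The plan is to deduce this Corollary directly from Theorem \ref{thm HS} by identifying the determinant $P_m^J(x,t)$ with the integral $\Phi^J_m(x,t,\mathbf{a},1)$ up to a nonzero scalar. Since the Schr\"odinger equations for the quantum Painlev\'e equations are linear, any nonzero constant multiple of an explicit solution is again a solution, so it suffices to prove such a proportionality.

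First, I would specialize Theorem \ref{thm HS} to $\hbar=1$ and take the cycle to be the product cycle $\Gamma_m^J=(\Gamma_1^J)^m$, so that the parameter condition $a=m\hbar$ becomes $a=m$ (and similarly $b+c+d=m-1$ is the reduced form of $b+c+d=(m-1)\hbar$). With these choices the integrand
$$
\prod_{1\le i<j\le m}(u_i-u_j)^{2}\prod_{i=1}^{m}\rho_\mathrm{J}(u_i,t,\mathbf{a})(x-u_i)
$$
is a classical Selberg/random-matrix type expression, symmetric in $u_1,\ldots,u_m$, so by Fubini the integral decouples across the $m$ copies of $\Gamma_1^J$ after expansion.

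Next, I would invoke Heine's classical identity for orthogonal polynomials (see \cite{Szego}, Chapter II): expanding $\prod_{i=1}^m(x-u_i)=\sum_{k=0}^{m}(-1)^{m-k}e_{m-k}(u)\,x^k$ in elementary symmetric polynomials and applying Andr\'eief's determinant identity to the Vandermonde-squared factor, one obtains
$$
\int_{(\Gamma_1^J)^m}\prod_{1\le i<j\le m}(u_i-u_j)^{2}\prod_{i=1}^{m}\rho_\mathrm{J}(u_i)(x-u_i)\,du_i
= m!\;P_m^\mathrm{J}(x,t),
$$
where on the right the moments $\tau_J^{(k)}=\int_{\Gamma_1^J}u^k\rho_J(u)\,du$ appearing in \eqref{eq determinant formula} match exactly the entries produced by the Andr\'eief expansion. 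This identifies $P_m^\mathrm{J}(x,t)$, up to the overall factor $m!$, with $\Phi_m^\mathrm{J}(x,t,\mathbf{a},1)$.

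Combining this identification with Theorem \ref{thm HS} at $\hbar=1$ gives that $P_m^\mathrm{J}(x,t)$ solves the Schr\"odinger equation for the quantum Painlev\'e equation of type $\mathrm{J}$ under the stated parameter conditions. The only non-routine point is to ensure that $\Gamma_1^J$ is chosen so that all moments $\tau_J^{(k)}$ are finite and the derivation under the integral sign is justified; this is standard and is the content of the cycle-choice conventions underlying the references \cite{Szego}, \cite{FW}, and \cite{Masuda}.
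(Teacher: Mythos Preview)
Your proposal is correct and follows essentially the same approach as the paper: both specialize Theorem \ref{thm HS} to $\hbar=1$ with the product cycle $(\Gamma_1^J)^m$ and then identify the resulting integral with the determinant $P_m^\mathrm{J}(x,t)$ via the classical orthogonal-polynomial representation from \cite{Szego}. You are simply more explicit about the mechanism (naming the Heine/Andr\'eief identity and the constant $m!$), whereas the paper states the identification by reference.
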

%%%%%%%%%%%%%%%%%%%%%%%%%%%%%%%%%%%%%%%%%%%%%%%%%%%%%%

%%%%%%%%%%%%%%%%%%%%%%%%%%%%%%%%%%%%%%%%%%%%%%%%%%%%%%%
%%%%%%%%%%%%%%%%%%%%%%%%%%%%%%%%%%%%%%%%%%%%%%%%%%%%%%%
%%%%%%%%%%%%%%%%%%%%%%%%%%%%%%%%%%%%%%%%%%%%%%%%%%%%%%%

\section{Relation to the KZ equation}\label{sec kz}
In this section, we give a representation theoretic correspondence between  the Schr\"odinger equations for the sixth quantum Painlev\'e equation 
and the  KZ equation, and between  the Schr\"odinger equations for the other quantum Painlev\'e equations and the confluent KZ equations, 
which  were defined directly in \cite{JNS} and can be derived from the irregular conformal field theory \cite{NS}. 
%For the other cases, the quantum Painlev\'e equations are corresponding to 
%confluent KZ equations defined in \cite{JNS}, where a quantization of the Painlev\'e equations 
%with affine Weyl group symmetries were formally obtained from confluent KZ equations. 
In what follows, we only give necessary facts 
on confluent KZ equations 
in this paper. See \cite{JNS} for the detail.

Let us recall the definition of confluent Verma modules in \cite{JNS}. Set $\g=\slt$ and $\g[z]=\g\otimes \C[z]$. 
Denote by $e,f,h$ the standard basis of $\g$. 
For non-negative integer $r$, 
denote by $\g_{(r)}$ and 
$\g'_{(r)}$ 
the truncated Lie algebras $\g[z]/z^{r+1}\g[z]$ and $\g'_{(r)}=z\g[z]/z^{r+1}\g[z]$, respectively. 
For an $(r+1)$-tuple parameters 
$\gamma =(\gamma_0,\ldots,\gamma_{r-1},\gamma_r)\in\C^r\times\C^\times$, 
a confluent Verma 
module $M(\gamma)$ of Poincar\'e rank $r$   
is a  cyclic $\g_{(r)}$-module generated by ${\bf 1}_\gamma$ such that 
\begin{equation*}
(e\otimes z^p){\bf 1}_\gamma=0, \quad (h\otimes z^p){\bf 1}_\gamma=\gamma_p{\bf 1}_\gamma 
\quad (0\le p\le r). 
\end{equation*}
For the Lie subalgebra $\g'_{(r)}=z\g[z]/z^{r+1}\g[z]$, a confluent Verma module $M'(\gamma)$ 
of Poincar\'e rank $r$  with parameters $\gamma=(\gamma_1,\ldots,\gamma_r)\in\C^{r-1}\times \C^\times$ 
is a cyclic $\g'_{(r)}\oplus \C (h\otimes z^0)$-module generated by ${\bf 1}_\gamma$ such that 
\begin{equation*}
(e\otimes z^p){\bf 1}_\gamma=0, \quad (h\otimes z^p){\bf 1}_\gamma=\gamma_p{\bf 1}_\gamma 
\quad (1\le p\le r), \quad 
(h\otimes z^0){\bf 1}_\gamma=0, 
\end{equation*}
and $e\otimes z^r$ and $f\otimes z^r$ act as zero operators on $M'(\gamma)$. 
%Since 
%$[h\otimes z^0, \g'_{(r)}]\subset\g'_{(r)}$, the action of $h\otimes z^0$ on $M'(\gamma)$ is 
%well-defined.  We regard $M'(\gamma)$ as a module over $\g'_{(r)}\oplus \C (h\otimes z^0)$. 

Let differential operators $D_k$ ($ 0\le k\le r-1$) be defined as 
\begin{equation*}
D_k=\sum_{p=1}^{r-k}p\gamma_{k+p}\frac{\partial}{\partial \gamma_p} 
\end{equation*}
acting on $M(\gamma)$ as 
\begin{equation*}
D_k(x\otimes z^p)=p(x\otimes z^{p+k})\quad (x\in\g, 0\le p\le r), \quad D_k({\bf 1}_\gamma)=0. 
\end{equation*}
Here we regard $x\otimes z^p$ as  an operator on $M(\gamma)$.   
%%%%%%%%%%%%%%%%%%%%%%%%%%%%%%%%%%%%%%%%%%%
\begin{exmp}
For the $\g'_{(2)}$ case, $M'(\gamma_1,\gamma_2)$ is realized on the polynomial ring $\C[x]$. 
The action of $x\otimes z^p$ ($x\in\g$, $p=1,2$) are following. 
\begin{equation*}
e\otimes z=\gamma_2^{\frac{1}{2}}\frac{\partial}{\partial x},\quad e\otimes z^2=0,\quad 
f\otimes z=\gamma_2^{\frac{1}{2}}x,\quad f\otimes z^2=0, \quad h\otimes z=\gamma_1,\quad h\otimes z^2=\gamma_2.
\end{equation*}
\end{exmp}
%%%%%%%%%%%%%%%%%%%%%%%%%%%%%%%%%%%%%%%%%%%

Let $z_1, \ldots, z_n$ be distinct points in $\mathbb{C}$ and let $r_1,\ldots, r_n, r_\infty$ be non-negative integers. 
Set $\frak{a}=\left(
\oplus_{i=1}^n\g^{(i)}
\right)\oplus \g^{(\infty)}$, where $\g^{(i)}=\g_{(r_i)}$ ($i=1,\ldots, n$) and $\g^{(\infty)}=\g'_{(r_\infty)}$. 
We consider a family of $\frak{a}$-modules 
\begin{equation*}
M(\gamma )=M^{(1)}\otimes \cdots \otimes M^{(n)}\otimes M^{(\infty)}, 
\end{equation*}
parametrized by $\gamma=\left(
\gamma^{(1)},\ldots, \gamma^{(n)},\gamma^{(\infty)}
\right)$, where 
\begin{align*}
&M^{(i)}=M(\gamma^{(i)}), \quad \gamma^{(i)}=\left(\gamma_0^{(i)},\ldots, \gamma_{r_i}^{(i)}\right)
\in \C^{r_i}\times \C^\times,
\\
&M^{(\infty)}=M'(\gamma^{(\infty)}),\quad \gamma^{(\infty)}=\left(\gamma_1^{(\infty)},\ldots, \gamma_{r_\infty}^{(\infty)}\right)
\in \C^{r_\infty-1}\times \C^\times.
\end{align*}
Set ${\bf 1_\gamma=1_{\gamma^{(1)}} \otimes \cdots \otimes 1_{\gamma^{(n)}} \otimes 1_{\gamma^{(\infty)}}}$. 

The confluent KZ equations defined in \cite{JNS} are differential systems for unknown functions  $\Phi(z,\gamma)$ taking values in $M(\gamma)$ 
with respect to the following differential operators 
\begin{align*}
&\frac{\partial}{\partial z_i}\quad (i=1,\ldots, n), \quad  
\\&D^{(i)}_k\quad (i=1,\ldots,n, k=0,\ldots, r_i-1), \quad 
\\&D^{(\infty)}_k\quad ( k=1,\ldots, r_\infty-1).  \quad 
\end{align*}
If $r_i=0$ ($1\le i\le n$) and $r_\infty=0$, then the confluent KZ equations are equal to the usual KZ equations. 

It was shown  in \cite{JNS} that the confluent KZ equations have integral formulas of 
confluent hypergeometric type for  solutions.

%%%%%%%%%%%%%%%%%%%%%%%%%%%%%%%%%%%%%%%%%%
\subsection{The case of $\mathrm{J=VI}$} 

Let $n=3$, $r_i=0$, $z_1=0$, $z_2=t$, $z_3=1$ and $\gamma^{(i)}_0\not\in\Z$ ($1\le i\le 3$). 
Then $M=M(\gamma^{(1)}_0)\otimes M(\gamma^{(2)}_0)\otimes M(\gamma^{(3)}_0)$ 
and the KZ equation for 
an unknown function $\Phi(t)$ taking values in $M$ is 
\begin{equation}\label{eq kz}
\kappa \frac{\partial \Psi(t)}{\partial t}=\left(
\frac{\Omega^{(1,2)}}{t}+\frac{\Omega^{(2,3)}}{t-1}
\right)\Psi(t).
\end{equation}
Here $\kappa$ is a complex parameter and $\Omega^{(i,j)}$ 
are the Casimir operators: 
\begin{equation*}
\Omega^{(1,2)}=e^{(1)}f^{(2)}+f^{(1)}e^{(2)}+\frac{1}{2}h^{(1)}h^{(2)},
\quad 
\Omega^{(2,3)}=e^{(2)}f^{(3)}+f^{(2)}e^{(3)}+\frac{1}{2}h^{(2)}h^{(3)}, 
\end{equation*}
where  $x^{(i)}:M\to M$ is the linear operator acting as $x$ on $i$th tensor factor and 
as identities on the others, and 
here and after,  we abbreviate $x\otimes z^0$ to $x$, for $x=e,f,h$.

Let $W_m$ ($m\in\Z_{\ge 0}$) be the space of singular vectors of  
the weight $\sum_{i=1}^3\gamma^{(i)}_0-2m$ in $M$, namely, 
\begin{equation*}
W_m=\left\{
v\in M \bigg| \sum_{i=1}^3 e^{(i)} (v)=0, \ \sum_{i=1}^3h^{(i)}(v)=
\left(\sum_{i=1}^3\gamma^{(i)}_0-2m\right)v
\right\}. 
\end{equation*}
If $\gamma^{(i)}_0\not\in\Z$, then 
the dimension of $W_m$ is known to be $m+1$ (for example, see Proposition 4.1.1 in \cite{KZ EFK}). 
%In what follows, we give  correspondence between the action of $H_{KZ}$ on $W_m$ and the action $\widehat{H}_{VI}$ of on  $\bigoplus_{i=0}^m\C x^i$. 

In order to write down a basis of $W_m$, we take the differential realizations $\C[x_i]$ ($1\le i\le 3$) of $\g$, that is, the basis $e,f,h$ of $\g$ act on $\C[x_i]$ 
as follows: 
\begin{equation*}
e=\partial_i,
\quad
 h=-2x_i\partial_i+\gamma^{(i)}_0, 
 \quad
 f=-x_i^2\partial_i + \gamma^{(i)}_0x_i, 
\end{equation*}
where $\partial_i=\partial/\partial x_i$.
Note that if $\gamma^{(i)}_0\not\in\Z$, then $\C[x_i]$ are  
isomorphic to Verma modules $M(\gamma^{(i)}_0)$. We set $M(\gamma^{(i)}_0)=\C[x_i]$.

The space of singular vectors $W_m$  can be written by 
\begin{equation}\label{eq singular vector space}
W_m=\bigoplus_{i=0}^m\C(x_1-x_2)^i(x_1-x_3)^{m-i}.  
\end{equation}
We denote by $H_\mathrm{{KZ}}$ the Hamiltonian 
$\Omega^{(1,2)}/t+\Omega^{(2,3)}/(t-1)$. 
Let $\widetilde{H}_\mathrm{{KZ}}(m)$ be defined as 
\begin{equation*}
\widetilde{H}_\mathrm{{KZ}}(m)=\hbar^2\left(H_\mathrm{{KZ}}
-\frac{\lambda_1\lambda_2}{t}-\frac{\lambda_2(\lambda_3-m)}{t-1}
\right). 
\end{equation*}
We define  linear isomorphisms $T_m: W_m\to \bigoplus_{i=0}^m\C x^i$  ($m\in\Z_{\ge 0}$) as 
\begin{equation*}
T_m\left((x_1-x_2)^i(x_1-x_3)^{m-i}\right)=x^i\quad (0\le i\le m). 
\end{equation*}

%%%%%%%%%%%%%%%%%%%%%%%%%%%%%%%%%%%%%%%%%%%%%%%%
\begin{thm}\label{thm kz} For $\gamma^{(i)}_0\not\in\Z$ ($1\le i\le 3$) and $m\in\mathbb{Z}_{\ge 0}$,  
the action of $H_\mathrm{{KZ}}$ 
on  the space of singular vectors of weight $\sum_{i=1}^3 \gamma^{(i)}_0-2m$ is 
equivalent to the action of the quantized Hamiltonian $\widehat{H}_\mathrm{{VI}}$ on the subspace 
$\oplus_{i=0}^m\mathbb{C}x^i$  with $a=m\hbar$  or $b+c+d=(m-1)\hbar$.    
Namely,  we have 
\begin{equation}\label{eq hamiltonian action}
T_m\circ \widetilde{H}_\mathrm{{KZ}}(m) = \left(\widehat{H}_\mathrm{{VI}}\left(
x,\hbar\frac{\partial}{\partial x}, a, b, c,d,t
\right)
+\frac{a(b+c+d+\hbar)}{t-1}\right)
\circ T_m
\end{equation}
as linear maps from $W_m$ to $\bigoplus_{i=0}^m\C x^i$
with 
\begin{align}
&
\gamma^{(1)}_0=m-1-\frac{c}{\hbar},\label{eq kz 1} 
\\
&
\gamma^{(2)}_0=\frac{1}{\hbar}\left(
a+b+c+d+(1-m)\hbar
\right),
\quad
\gamma^{(3)}_0=m-1-\frac{a+b}{\hbar}, \label{eq kz 2}
\end{align}
 and 
 \begin{equation}\label{eq qp6 m}
a=m\hbar \quad   \text{or} \quad  b+c+d=(m-1)\hbar.  
\end{equation}
  
\end{thm}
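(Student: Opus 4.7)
The plan is to translate the KZ Hamiltonian into an explicit second-order differential operator using the differential realizations $M(\gamma^{(i)}_0)=\C[x_i]$, restrict it to the space of singular vectors $W_m$ spanned by $(x_1-x_2)^i(x_1-x_3)^{m-i}$, and then identify the result with $\widehat{H}_{\mathrm{VI}}$ after a change of variable that implements $T_m$.

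First, I would plug $e^{(i)}=\partial_i$, $h^{(i)}=-2x_i\partial_i+\gamma^{(i)}_0$, $f^{(i)}=-x_i^2\partial_i+\gamma^{(i)}_0 x_i$ into
\begin{equation*}
\Omega^{(1,2)}=e^{(1)}f^{(2)}+f^{(1)}e^{(2)}+\tfrac{1}{2}h^{(1)}h^{(2)},\qquad
\Omega^{(2,3)}=e^{(2)}f^{(3)}+f^{(2)}e^{(3)}+\tfrac{1}{2}h^{(2)}h^{(3)},
\end{equation*}
to get $H_{\mathrm{KZ}}$ as a bilinear expression in the $x_i,\partial_i$. Next, I would use the singular vector conditions to reduce the number of variables: $(\partial_1+\partial_2+\partial_3)v=0$ means $v$ depends only on the differences $x_1-x_2, x_1-x_3$, and the weight condition gives homogeneity of total degree $m$, confirming \eqref{eq singular vector space}.

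The decisive step is the change of variables. Since any $v\in W_m$ can be uniquely written as $v=(x_1-x_3)^m\tilde v(x)$ with $x=(x_1-x_2)/(x_1-x_3)$, and $T_m(v)=\tilde v(x)$ by construction, I would apply $H_{\mathrm{KZ}}$ to such a $v$, use the two constraints above (inside the calculation) to eliminate $\partial_2,\partial_3$ in favor of $\partial_1$, and then rewrite the result as $(x_1-x_3)^m\bigl(\mathcal{L}\tilde v\bigr)(x)$ for some differential operator $\mathcal{L}$ in $x$. The gauge factor $(x_1-x_3)^m$ and the chain rule for $x=(x_1-x_2)/(x_1-x_3)$ will produce precisely the coefficients $x(x-1)(x-t)$ in the second-order part and rational combinations of $\gamma^{(i)}_0$, $t$ and $x$ in the first-order part. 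Matching these coefficients with those of $\widehat{H}_{\mathrm{VI}}$ via \eqref{eq kz 1}-\eqref{eq kz 2} yields the identity \eqref{eq hamiltonian action}, with the scalar shift in $\widetilde{H}_{\mathrm{KZ}}(m)$ absorbing the $\tfrac{1}{2}h^{(i)}h^{(j)}$-pieces that act by constants on $W_m$, and the overall $\hbar^2$ prefactor fixing $\kappa$ relative to $\hbar$. The condition \eqref{eq qp6 m} corresponds exactly to the closure condition of Proposition \ref{prop a}; it appears here from the fact that $\dim W_m=m+1$ and $T_m$ is an isomorphism onto $\bigoplus_{i=0}^m\C x^i$.

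The main obstacle is bookkeeping rather than conceptual: after the change of variables, many first-order terms coming from $\Omega^{(1,2)}/t$, from $\Omega^{(2,3)}/(t-1)$, and from differentiating the gauge factor $(x_1-x_3)^m$ must be combined and the parameters re-labelled via \eqref{eq kz 1}-\eqref{eq kz 2} in order to recognize $\widehat{H}_{\mathrm{VI}}$. A convenient way to organize the computation is to verify the identity on the basis vectors $(x_1-x_2)^i(x_1-x_3)^{m-i}$ ($0\le i\le m$) directly, so that checking \eqref{eq hamiltonian action} reduces to comparing two $(m+1)\times(m+1)$ matrices whose entries are explicit rational functions of $t$; this three-term recursion structure mirrors the one already exploited in the proof of Theorem \ref{thm HS} for $\mathrm{J=VI}$ and is what ultimately makes the two actions coincide.
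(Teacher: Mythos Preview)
Your proposal is correct and aligns with the paper's own proof, which consists of the single sentence ``It follows from direct computations.'' You are simply spelling out one natural way to organize that computation: realize $H_{\mathrm{KZ}}$ as a differential operator in $x_1,x_2,x_3$, restrict to $W_m$ via the substitution $v=(x_1-x_3)^m\tilde v\bigl((x_1-x_2)/(x_1-x_3)\bigr)$, and match the resulting second-order operator in $x$ against $\widehat{H}_{\mathrm{VI}}$; the alternative you mention at the end (comparing the two $(m{+}1)\times(m{+}1)$ tridiagonal matrices on the basis $(x_1-x_2)^i(x_1-x_3)^{m-i}$) is an equally valid and perhaps more mechanical route to the same verification.
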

%%%%%%%%%%%%%%%%%%%%%%%%%%%%%%%%%%%%%%%%
\begin{proof}
It follows from direct computations. 
\end{proof}

We note that the condition $\gamma^{(i)}_0\not\in\Z$ ensures $M(\gamma^{(i)}_0)=\C[x_i]$, and 
for any $\gamma^{(i)}_0\in\C$ ($1\le i\le 3$), it holds \eqref{eq hamiltonian action} as linear maps from  
$\bigoplus_{i=0}^m\C(x_1-x_2)^i(x_1-x_3)^{m-i}$ to $\bigoplus_{i=0}^m\C x^i$ 
with the conditions \eqref{eq kz 1}, \eqref{eq kz 2}, and \eqref{eq qp6 m}. 
 
%%%%%%%%%%%%%%%%%%%%%%%%%%%%%%%%%%%%%%%%%

From Theorem \ref{thm kz}, we obtain a solution $\Psi(t)$ to the KZ equation \eqref{eq kz} 
from the 
integral formula $\Phi^{\mathrm{VI}}_m(x, t)$ for a solution to the Schr\"odinger equation for the sixth 
quantum Painlev\'e equation, that is, 
$$
\Psi(t)=t^{(\gamma^{(1)}_0\gamma^{(2)}_0)/\kappa}(t-1)^{(\gamma^{(2)}_0\gamma^{(3)}_0)/\kappa}T_m^{-1}(\Phi^{\mathrm{VI}}_m(x, t))
$$
obtained by 
replacing 
 $\hbar$, $a+b$, $c$, $d$ with 
\begin{equation*}
\frac{1}{\ka}, \quad 
a+b=\hbar(m-1-\gamma^{(3)}_0), \quad c=\hbar(m-1-\gamma^{(1)}_0), \quad 
d=\hbar(\gamma^{(1)}_0+\gamma^{(2)}_0+\gamma^{(3)}_0+1-m), 
\end{equation*}
respectively. 
Conversely, from Theorem \ref{thm kz}, we obtain 
\begin{equation*}
\left(
\hbar^2 \ka \frac{\partial}{\partial t}-\widehat{H}_\mathrm{VI}
\right)T_m\left(
\widetilde{\Psi}(t)
\right)=0,  
\end{equation*}
where $\widetilde{\Psi}(t)=t^{-(\gamma^{(1)}_0\gamma^{(2)}_0)/\kappa}(t-1)^{-(\gamma^{(2)}_0\gamma^{(3)}_0)/\kappa}
\Psi(t)$. 
If $\ka=1/\hbar$, then $T_m(\widetilde{\Psi}(t))=\Phi^{\mathrm{VI}}_m(x, t)$. 
Taking $\kappa=-1/\hbar$, we have 
\begin{equation*}
\left(
\hbar \frac{\partial}{\partial t}+\widehat{H}_\mathrm{IV}
\right)\Phi^\mathrm{VI}_m(x,t, -a,-b,-c,-d,-\hbar)=0. 
\end{equation*}
This implies the following symmetry. 
%%%%%%%%%%%%%%%%%%%%%%%%%%%%%%%%%%%%%
\begin{prop}
If a function $\Phi^{\mathrm{VI}}(x,t,a,b,c,d,\hbar)$ is a solution to the Schr\"odinger equation for 
the quantum sixth Painlev\'e equation , then 
the function $\Psi^{\mathrm{VI}}(x,t,a,b,c,d,\hbar)$ defined as 
\begin{equation}\label{def h 6}
\Psi^{\mathrm{VI}}(x,t,a,b,c,d,\hbar)=\Phi^{\mathrm{VI}}(x,t,-a,-b,-c,-d,\hbar)
\end{equation}
is a solution to 
\begin{equation*}
\hbar\frac{\partial}{\partial t}\Psi^\mathrm{VI}(x,t)=\widehat{H'}_\mathrm{VI} \Psi^\mathrm{VI}(x,t), 
\end{equation*}
where $\widehat{H'}_\mathrm{VI}$ are obtained by replacing $\hbar$ with $-\hbar$ in $\widehat{H}_\mathrm{VI}$. 
\end{prop}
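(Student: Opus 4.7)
The content of the proposition is the operator identity
\begin{equation*}
\widehat{H}_{\mathrm{VI}}\!\left(x,\hbar\frac{\partial}{\partial x},-a,-b,-c,-d,t\right) \;=\; \widehat{H'}_{\mathrm{VI}}\!\left(x,\hbar\frac{\partial}{\partial x},a,b,c,d,t\right),
\end{equation*}
together with the observation that any solution $\Phi^{\mathrm{VI}}(x,t,a,b,c,d,\hbar)$ of the Schr\"odinger equation remains a solution after substituting $(a,b,c,d)\to(-a,-b,-c,-d)$ simultaneously in the function and in the Hamiltonian. My plan is first to record this elementary reduction and then to verify the identity directly from Definition \ref{def hamiltonian}.

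Reading off $\widehat{H}_{\mathrm{VI}}$ from Definition \ref{def hamiltonian}, the symbol $\hbar$ appears only in three places: inside $(\hbar\,\partial/\partial x)^2$, inside the first-order factor $\hbar\,\partial/\partial x$, and additively in the scalar $(b+c+d+\hbar)$ multiplying the zero-order term. Under $\hbar\mapsto -\hbar$, the second-order term is unchanged, the first-order term picks up an overall minus sign, and the zero-order coefficient becomes $(b+c+d-\hbar)a$. Under $(a,b,c,d)\mapsto(-a,-b,-c,-d)$, the second-order term is again unchanged; each of the three first-order coefficients $(a+b)$, $c$, $d$ flips sign; and the zero-order coefficient becomes $(-b-c-d+\hbar)(-a)=(b+c+d-\hbar)a$. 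The two substitutions therefore produce the same differential operator, which is the claimed identity.

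Combining these two facts, specializing the parameters of the Schr\"odinger equation for $\Phi^{\mathrm{VI}}$ to $(-a,-b,-c,-d,\hbar)$ yields
\begin{equation*}
\hbar\frac{\partial}{\partial t}\Phi^{\mathrm{VI}}(x,t,-a,-b,-c,-d,\hbar) = \widehat{H}_{\mathrm{VI}}\!\left(x,\hbar\frac{\partial}{\partial x},-a,-b,-c,-d,t\right)\Phi^{\mathrm{VI}}(x,t,-a,-b,-c,-d,\hbar),
\end{equation*}
which is precisely $\hbar\,\partial_t\Psi^{\mathrm{VI}} = \widehat{H'}_{\mathrm{VI}}\,\Psi^{\mathrm{VI}}$ by the definition \eqref{def h 6} of $\Psi^{\mathrm{VI}}$ and the operator identity just verified.

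There is no real obstacle here: the whole argument is a short check on a Hamiltonian with only a handful of terms, and the only bookkeeping point is that the lone explicit $+\hbar$ in the scalar coefficient cancels against the sign flip of $a$ in exactly the same way that $\hbar\to-\hbar$ would do it. I note that the paragraph preceding the proposition sketches an alternative derivation via Theorem \ref{thm kz}: since $(\hbar^2\kappa\,\partial/\partial t-\widehat{H}_{\mathrm{VI}})T_m(\widetilde\Psi(t))=0$ holds for every value of $\kappa$, the choice $\kappa=-1/\hbar$ (as opposed to $\kappa=1/\hbar$) forces the simultaneous sign flips of $a+b$, $c$, $d$ and of $a=m\hbar$ through the relations \eqref{eq kz 1}--\eqref{eq kz 2}, giving the same symmetry a representation-theoretic origin.
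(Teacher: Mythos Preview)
Your proof is correct and is exactly the ``direct computation'' the paper invokes: you verify term-by-term that the substitutions $(a,b,c,d)\mapsto(-a,-b,-c,-d)$ and $\hbar\mapsto-\hbar$ produce the same operator from $\widehat{H}_{\mathrm{VI}}$, and then read off the conclusion. Your closing remark on the $\kappa=-1/\hbar$ route mirrors the paper's motivating discussion preceding the proposition, so both the argument and the contextual observation match the paper.
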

%%%%%%%%%%%%%%%%%%%%%%%%%%%%%%%
\begin{proof}
 It follows from direct computations. 
\end{proof}
%%%%%%%%%%%%%%%%%%%%%%%%%%%%%%%%%%%

%%%%%%%%%%%%%%%%%%%%%%%%%%%%%%%%%%%%%%%%%%%%%%%%
\subsection{The case of $\mathrm{J=V}$}
Let $n=2$, $r_1=r_2=0$, $r_\infty=1$, $z_1=0$, and $z_2=1$. Then $M=M(\gamma^{(1)}_0)\otimes M(\gamma^{(2)}_0)\otimes M'(\gamma^{(\infty)}_1)$ 
and the confluent KZ equation for an unknown function $\Phi(\gamma^{(\infty)}_1)$ taking values in $M$ is 
\begin{equation}\label{eq 5}
\kappa \gamma^{(\infty)}_1\frac{\partial}{\partial \gamma^{(\infty)}_1} \Psi(\gamma^{(\infty)}_1)=
\left(
G_0^{(\infty)}-\frac{1}{4}h_0(h_0+2)
\right)\Psi(\gamma^{(\infty)}_1), 
\end{equation}
where $\kappa\in\C$, $G_0^{(\infty)}=G_{-1}^{(2)}+G_0^{(1)}+G_0^{(2)}$, 
\begin{align*}
G_{-1}^{(2)}=&-\frac{1}{2}h^{(2)}\gamma^{(\infty)}_1+e^{(1)}f^{(2)}+f^{(1)}e^{(2)}+\frac{1}{2}h^{(1)}h^{(2)},
\\
G_0^{(i)}=&\frac{1}{2}\left(
e^{(i)}f^{(i)}+f^{(i)}e^{(i)}+\frac{1}{2}h^{(i)}h^{(i)}
\right)\quad (i=1,2),
\\
h_0=&h^{(1)}+h^{(2)}. 
\end{align*}

Let 
$M_m$ ($m\in\Z_{\ge 0}$) be the weight space of $M$ with the weight $\gamma^{(1)}_0+\gamma^{(2)}_0-2m$, namely, 
\begin{align*}
M_m=&\left\{
v\in M\ \bigg|  \ h_0(v)=
\left(\gamma^{(1)}_0+\gamma^{(2)}_0-2m\right)v
\right\}
\\=&\bigoplus_{i=0}^m\C (f^{(1)})^i(f^{(2)})^{m-i}{\bf 1_\gamma}. 
\end{align*}
An integral formula taking values in $M_m$ for solutions to the confluent KZ equation \eqref{eq 5}  is 
\begin{align}
\int_\Gamma &\prod_{i=1}^mdu_i\exp\left(
-\frac{1}{2\kappa}\gamma^{(2)}_0\gamma^{(\infty)}_1
\right)\prod_{1\le i<j\le m}(u_i-u_j)^{2/\kappa} \prod_{i=1}^mu_i^{-\gamma^{(1)}_0/\ka}(1-u_i)^{-\gamma^{(2)}_0/\ka} 
\label{eq integral formula v}
\\
&\times \exp\left(
\frac{1}{\ka}\sum_{i=1}^mu_i\gamma^{(\infty)}_1
\right)
\prod_{i=1}^m\left(
\frac{f^{(1)}}{u_i}+\frac{f^{(2)}}{u_i-1}
\right){\bf 1_\gamma}, \nonumber
\end{align}
where $\Gamma$ is an appropriate cycle (see Proposition 4.2 in \cite{JNS}).

We define linear isomorphisms $T_m: M_m\to \bigoplus_{i=0}^m\C x^i$ ($m\in\Z_{\ge 0}$) as 
\begin{equation*}
T_m\left((f^{(1)})^i
(f^{(1)}+f^{(2)})^{m-i}{\bf 1_\gamma}
\right)
=x^i\quad (0\le i\le m).
\end{equation*}

 %%%%%%%%%%%%%%%%%%%%%%%%%%%%%%%%%%%%%%%%%
\begin{thm}\label{thm ckz v}
For $m\in\Z_{\ge 0}$, we have 
\begin{equation*}
T_m\circ \left(
G_0^{(\infty)}-\frac{1}{4}h_0(h_0+2)+\frac{1}{2}\gamma^{(2)}_0\gamma^{(\infty)}_1
\right)=\left(\frac{t}{\hbar^2}\widehat{H}_\mathrm{V}\left(
x, \hbar\frac{\partial}{\partial x}, m\hbar, b, c, t
\right)\right)\circ T_m
\end{equation*}
as linear maps from $M_m$ to $\bigoplus_{i=0}^m\C x^i$ 
with 
\begin{equation}\label{eq condition ckz v}
\gamma^{(1)}_0=\frac{b}{\hbar}, \quad \gamma^{(2)}_0=\frac{c}{\hbar}, \quad \gamma^{(\infty)}_1=\frac{t}{\hbar}.
\end{equation} 
\end{thm}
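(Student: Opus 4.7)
The plan is to reduce the assertion to a finite-dimensional tridiagonal matrix identity on $M_m$, and verify it by direct computation.

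First I would simplify the left-hand operator. The pieces $G_0^{(1)}$ and $G_0^{(2)}$ are the quadratic Casimirs of the two $\mathfrak{sl}_2$ factors and therefore act on $M(\gamma^{(i)}_0)$ as the scalars $\tfrac14\gamma^{(i)}_0(\gamma^{(i)}_0+2)$. On the weight subspace $M_m$, $h_0$ acts as the scalar $\gamma^{(1)}_0+\gamma^{(2)}_0-2m$, so $\tfrac14 h_0(h_0+2)$ is also scalar there. Thus, up to an explicit constant on $M_m$, the operator on the left reduces to $G_{-1}^{(2)}+\tfrac12\gamma^{(2)}_0\gamma^{(\infty)}_1$.

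Next I would compute $G_{-1}^{(2)}$ in the natural basis $v_i=(f^{(1)})^i(f^{(2)})^{m-i}\mathbf{1}_\gamma$ of $M_m$. The $\mathfrak{sl}_2$ action on a Verma module gives $e(f)^k\mathbf{1}=k(\gamma_0-k+1)(f)^{k-1}\mathbf{1}$ and $h(f)^k\mathbf{1}=(\gamma_0-2k)(f)^k\mathbf{1}$, so each of the four summands in $G_{-1}^{(2)}=-\tfrac12 h^{(2)}\gamma^{(\infty)}_1+e^{(1)}f^{(2)}+f^{(1)}e^{(2)}+\tfrac12 h^{(1)}h^{(2)}$ is immediate: the first and last are diagonal on the $v$-basis, while $e^{(1)}f^{(2)}$ and $f^{(1)}e^{(2)}$ send $v_i$ to scalar multiples of $v_{i-1}$ and $v_{i+1}$ respectively. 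Collecting yields an explicit tridiagonal expression for $G_{-1}^{(2)}v_i$.

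Then I would convert to the basis $w_i=(f^{(1)})^i(f^{(1)}+f^{(2)})^{m-i}\mathbf{1}_\gamma$. Since $f^{(1)}$ and $f^{(2)}$ commute on $M^{(1)}\otimes M^{(2)}$, the binomial expansion gives the triangular transition $w_i=\sum_{j\geq i}\binom{m-i}{j-i}v_j$, with inverse $v_i=\sum_{j\geq i}(-1)^{j-i}\binom{m-i}{j-i}w_j$. Substituting the inverse into the previously-computed $G_{-1}^{(2)}v_i$ and reassembling, one obtains $G_{-1}^{(2)}w_i$ as an explicit combination of $w_{i-1}, w_i, w_{i+1}$. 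On the right-hand side I would compute the action of $\tfrac{t}{\hbar^2}\widehat H_\mathrm{V}$ on $x^i$ directly from Definition \ref{def hamiltonian} with $a=m\hbar$, which is also manifestly tridiagonal in the basis $\{x^i\}$. Finally, inserting the parameter dictionary \eqref{eq condition ckz v}, i.e.\ $\gamma^{(1)}_0=b/\hbar,\ \gamma^{(2)}_0=c/\hbar,\ \gamma^{(\infty)}_1=t/\hbar$, and adding back the scalar piece absorbed in Step~1, I would match the coefficients of $w_{i-1},w_i,w_{i+1}$ with those of $x^{i-1},x^i,x^{i+1}$ term by term.

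The main obstacle is bookkeeping. The binomial change of basis from $v$ to $w$ mixes the diagonal, subdiagonal, and superdiagonal entries, and these must collapse--after combining with the $h^{(2)}\gamma^{(\infty)}_1/2$ shift and the $h_0(h_0+2)/4$ scalar--into exactly the three-term recursion produced by $\widehat H_\mathrm{V}$. This is also the content of the theorem: the specific replacement of $f^{(2)}$ by $f^{(1)}+f^{(2)}$ in the definition of $T_m$ is precisely what is needed to align the two tridiagonal matrices after the dictionary substitution. Once the matching is verified, no further argument is required.
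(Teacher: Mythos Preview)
Your proposal is correct and is precisely a detailed plan for the ``direct computations'' that the paper invokes as its entire proof. The paper gives no further argument, so your scalar reduction via the Casimirs and $h_0$, the tridiagonal computation of $G_{-1}^{(2)}$ in the $v$-basis, the binomial change to the $w$-basis, and the term-by-term comparison with $t\hbar^{-2}\widehat H_\mathrm{V}\, x^i$ under the dictionary \eqref{eq condition ckz v} constitute exactly what is meant.
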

\begin{proof}
It follows from direct computations. 
\end{proof}
%%%%%%%%%%%%%%%%%%%%%%%%%%%%%%%%%%%%%%
{\bf A proof of Theorem \ref{thm HS} for the  case of $\mathrm{J=V}$.} 
Substitute \eqref{eq condition ckz v} into \eqref{eq integral formula v}. Then we have
\begin{equation*}
\left(
\kappa t \frac{\partial}{\partial t}-G_0^{(\infty)}+\frac{1}{4}h_0(h_0+2)-\frac{ct}{2\hbar^2}
\right)\Phi(t)=0, 
\end{equation*}
where 
\begin{align}
\Phi(t)=\int_\Gamma &\prod_{i=1}^mdu_i
\prod_{1\le i<j\le m}(u_i-u_j)^{2/\kappa} \prod_{i=1}^mu_i^{-b/(\hbar\ka)-1}(1-u_i)^{-c/(\hbar\ka)-1} 
\label{eq integral formula v 1}
\\
&\times \exp\left(
\frac{1}{\hbar\ka}\sum_{i=1}^mu_i t
\right)
\prod_{i=1}^m\left(
f^{(1)}-(f^{(1)}+f^{(2)})u_i
\right){\bf 1_\gamma}.  \nonumber
\end{align}
From Theorem \ref{thm ckz v}, we obtain 
\begin{equation}\label{eq ckz v 1}
\left(
\kappa t\frac{\partial}{\partial t}-\frac{t}{\hbar^2}\widehat{H}_\mathrm{V}
\right)T_m\left(
\Phi(t)
\right)=0.
\end{equation}
If  $\ka=1/\hbar$, then \eqref{eq ckz v 1} becomes 
\begin{equation*}
\left(
\hbar \frac{\partial}{\partial t}-\widehat{H}_\mathrm{V}
\left(
x, \hbar\frac{\partial}{\partial x}, m\hbar, b, c, t
\right)\right)
\Phi_m^\mathrm{V}(x,t,b,c,\hbar)=0. 
\end{equation*}
\qed

%%%%%%%%%%%%%%%%%%%%%%%%%%%%%%%%%%%%%
From \eqref{eq ckz v 1},  an equation
\begin{equation*}
\hbar^2 \kappa \frac{\partial}{\partial t}\Phi(x,t)=\widehat{H}_\mathrm{V}
\left(
x, \hbar\frac{\partial}{\partial x}, m\hbar, b, c, t
\right)\Phi(x,t)
\end{equation*}
also has integral formulas for solutions. 
 Especially, taking $\kappa=-1/\hbar$, we have 
\begin{equation*}
\left(
\hbar \frac{\partial}{\partial t}+\widehat{H}_\mathrm{V}
\left(
x, \hbar\frac{\partial}{\partial x}, m\hbar, b, c, t
\right)\right)
\Phi_m^\mathrm{V}(x,-t,-b,-c,-\hbar)=0
\end{equation*}
from \eqref{eq integral formula v 1}. 
 This implies the following 
symmetry. 
%%%%%%%%%%%%%%%%%%%%%%%%%%%%%%%%%%%%
\begin{prop}
If a function $\Phi^{\mathrm{V}}(x,t,a,b,c,\hbar)$ is a solution to the Schr\"odinger equation for 
the quantum fifth Painlev\'e equation, then 
the function $\Psi^{\mathrm{V}}(x,t,a,b,c,\hbar)$ defined as 
\begin{equation}\label{def h 5}
\Psi^{\mathrm{V}}(x,t,a,b,c,\hbar)=\Phi^{\mathrm{V}}(x,-t,-a,-b,-c,\hbar)
\end{equation}
is a solution to 
\begin{equation*}
\hbar\frac{\partial}{\partial t}\Psi^\mathrm{V}(x,t)=\widehat{H'}_\mathrm{V} \Psi^\mathrm{V}(x,t), 
\end{equation*}
where $\widehat{H'}_\mathrm{V}$ are obtained by replacing $\hbar$ with $-\hbar$ in $\widehat{H}_\mathrm{V}$. 
\end{prop}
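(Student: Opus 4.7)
The plan is to reduce the proposition to a symmetry identity on the Hamiltonian alone, and then apply the chain rule. Concretely, I want to establish the operator identity
\begin{equation*}
\widehat{H'}_{\mathrm{V}}\!\left(x,\hbar\tfrac{\partial}{\partial x},a,b,c,t\right)
= -\,\widehat{H}_{\mathrm{V}}\!\left(x,\hbar\tfrac{\partial}{\partial x},-a,-b,-c,-t\right),
\end{equation*}
and then use the fact that replacing $t$ by $-t$ in any solution $\Phi^{\mathrm{V}}(x,t,\ldots)$ multiplies the time derivative by $-1$ via the chain rule.

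First I would verify the displayed operator identity by substituting into the explicit expression
\begin{equation*}
t\widehat{H}_{\mathrm{V}}(x,\hbar\partial_x,a,b,c,t)
=x(x-1)(\hbar\partial_x)^2+\bigl(tx^2-(b+c+t)x+b\bigr)\hbar\partial_x+a(b+c-a+\hbar+t-tx).
\end{equation*}
Sending $(a,b,c,t)\mapsto(-a,-b,-c,-t)$ turns $tx^2-(b+c+t)x+b$ into $-(tx^2-(b+c+t)x+b)$, and it turns $a(b+c-a+\hbar+t-tx)$ into $-a(b+c-a-\hbar+t-tx)\cdot(-1)=a(b+c-a-\hbar+t-tx)$; meanwhile the leading second-order term $x(x-1)(\hbar\partial_x)^2$ is unchanged. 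Dividing by $-t$ (the new coefficient on the left) and comparing with the definition of $\widehat{H'}_{\mathrm{V}}$ (obtained by $\hbar\mapsto-\hbar$, which flips the sign of the first-order term and changes $+\hbar$ to $-\hbar$ in the last bracket) yields the claimed identity after collecting terms. This is essentially the ``direct computation'' the author refers to, and it is the only place where any real checking occurs; the main care is tracking the signs on the $\hbar$ that appears inside the constant term.

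Once the identity is in hand, assume $\Phi^{\mathrm{V}}(x,t,a,b,c,\hbar)$ solves $\hbar\partial_t\Phi^{\mathrm{V}}=\widehat{H}_{\mathrm{V}}(x,\hbar\partial_x,a,b,c,t)\Phi^{\mathrm{V}}$. Substituting $(t,a,b,c)\mapsto(-t,-a,-b,-c)$ in this functional identity gives
\begin{equation*}
\hbar\,\bigl(\partial_s\Phi^{\mathrm{V}}\bigr)(x,s,a',b',c',\hbar)
=\widehat{H}_{\mathrm{V}}(x,\hbar\partial_x,a',b',c',s)\,\Phi^{\mathrm{V}}(x,s,a',b',c',\hbar)
\end{equation*}
with $s=-t$, $a'=-a$, $b'=-b$, $c'=-c$. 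Writing $\Psi^{\mathrm{V}}(x,t,a,b,c,\hbar)=\Phi^{\mathrm{V}}(x,-t,-a,-b,-c,\hbar)$ and using $\partial_t\Psi^{\mathrm{V}}=-(\partial_s\Phi^{\mathrm{V}})|_{s=-t}$, we obtain
\begin{equation*}
\hbar\partial_t\Psi^{\mathrm{V}}=-\widehat{H}_{\mathrm{V}}(x,\hbar\partial_x,-a,-b,-c,-t)\Psi^{\mathrm{V}}
=\widehat{H'}_{\mathrm{V}}(x,\hbar\partial_x,a,b,c,t)\Psi^{\mathrm{V}}
\end{equation*}
by the operator identity, proving the claim.

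The only obstacle is the sign-bookkeeping in the Hamiltonian identity; once that is clean the conclusion is immediate. As a sanity check, the identity is consistent with the integral-formula derivation displayed just before the proposition: specializing $\kappa=-1/\hbar$ in \eqref{eq ckz v 1} produces exactly the equation satisfied by $\Phi^{\mathrm{V}}_m(x,-t,-b,-c,-\hbar)$, which is the same symmetry read off from the hypergeometric integral rather than from the operator. So the operator identity and the integral identity are two faces of the same computation, and verifying either suffices.
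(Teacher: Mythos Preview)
Your proposal is correct and is precisely the ``direct computation'' the paper alludes to: the paper gives no details beyond that phrase, and your operator identity $\widehat{H'}_{\mathrm{V}}(x,\hbar\partial_x,a,b,c,t)=-\widehat{H}_{\mathrm{V}}(x,\hbar\partial_x,-a,-b,-c,-t)$ together with the chain rule is exactly what needs to be checked. Your sign-tracking is clean, and the sanity check via $\kappa=-1/\hbar$ matches the motivation the paper gives just before stating the proposition.
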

%%%%%%%%%%%%%%%%%%%%%%%%%%%%%%%
\begin{proof}
 It follows from direct computations. 
\end{proof}

%\begin{prop}
%Let $w$ and $r$ be automorphisms on a Weyl algebra $\C[x, \partial/\partial x, t, \partial/ \partial t, a,b,c, \hbar]$ defined as 
%\begin{align*}
%&w(x)=x, \quad w\left(\frac{\partial}{\partial x}\right)=\frac{\partial}{\partial x}, \quad 
%w(t)=-t, \quad w\left(\frac{\partial}{\partial t}\right)=-\frac{\partial}{\partial t}, 
%\\
%&w(a)=-a, \quad w(b)=-b, \quad w(c)=-c, \quad w(\hbar)=\hbar, 
%\end{align*}
%and 
%\begin{align*}
%&r(x)=x, \quad r\left(\frac{\partial}{\partial x}\right)=\frac{\partial}{\partial x}, \quad 
%r(t)=t, \quad r\left(\frac{\partial}{\partial t}\right)=\frac{\partial}{\partial t}, 
%\\
%&r(a)=a, \quad r(b)=b, \quad r(c)=c, \quad r(\hbar)=-\hbar. 
%\end{align*}
%Then, we have 
%\begin{equation*}\label{eq v symmetry}
%w\left(
%\hbar t\frac{\partial}{\partial t}-t\widehat{H}_\mathrm{V}
%\left(
%x, \hbar\frac{\partial}{\partial x}, a, b, c, t
%\right)
%\right)=\hbar t\frac{\partial}{\partial t}-r\left(t\widehat{H}_\mathrm{V}
%\left(
%x, \hbar\frac{\partial}{\partial x}, a, b, c, t
%\right)\right).
%\end{equation*}

%%%%%%%%%%%%%%%%%%%%%%%%%%%%%%%%%%%%
%\end{prop}
%\begin{proof}
% It follows from direct computations. 
%\end{proof}

%%%%%%%%%%%%%%%%%%%%%%%%%%%%%%%%%%%%%
\subsection{The case of $\mathrm{J=IV}$}
Let $n=1$, $r_1=0$, $r_\infty=2$, and $z_1=0$. Then $M=M(\gamma^{(1)}_0)\otimes M'(\gamma^{(\infty)}_1,\gamma^{(\infty)}_2)$ 
and the confluent KZ equation for an unknown function $\Psi(\gamma^{(\infty)}_1)$ taking values in $M$ is 
\begin{equation}\label{eq 4}
\kappa \gamma^{(\infty)}_2\frac{\partial}{\partial \gamma^{(\infty)}_1} \Psi(\gamma^{(\infty)}_1)=
\left(
G_1^{(\infty)}+\frac{1}{2}h_0\gamma^{(\infty)}_1
\right)\Psi(\gamma^{(\infty)}_1), 
\end{equation}
where $\kappa\in\C$, 
\begin{align*}
G_{1}^{(\infty)}=&-e^{(1)}(f\otimes z)^{(\infty)}+f^{(1)}(e\otimes z)^{(\infty)}+\frac{1}{2}h^{(1)}\gamma^{(\infty)}_1,
\\
h_0=&h^{(1)}+h^{(\infty)}. 
\end{align*}
Here $x^{(\infty)}$ for $x\in\g'_{(2)}\oplus \C (h\otimes 1)$ stands for the linear operator acting as $x$ on $M'(\gamma^{(\infty)}_1,\gamma^{(\infty)}_2)$ 
and as identity on the other. 

Let 
$M_m$ ($m\in\Z_{\ge 0}$) be the weight space of $M$ with the weight $\gamma^{(1)}_0-2m$, namely, 
\begin{align*}
M_m=&\left\{
v\in M\ \bigg|  \ h_0(v)=
\left(\gamma^{(1)}_0-2m\right)v
\right\}
\\=&\bigoplus_{i=0}^m\C (f^{(1)})^i((f\otimes z)^{(\infty)})^{m-i}{\bf 1_\gamma}. 
\end{align*}
An integral formula taking values in $M_m$ for solutions to the confluent KZ equation \eqref{eq 4}  is 
\begin{align}
\int_\Gamma &\prod_{i=1}^mdu_i
\prod_{1\le i<j\le m}(u_i-u_j)^{2/\kappa} \prod_{i=1}^mu_i^{-\gamma^{(1)}_0/\ka}
\label{eq integral formula iv}
\\
&\times \exp\left(
\frac{1}{\ka}\sum_{i=1}^m\left(u_i\gamma^{(\infty)}_1+\frac{u_i^2}{2}\gamma^{(\infty)}_2\right)
\right)
\prod_{i=1}^m\left(
\frac{f^{(1)}}{u_i}-(f\otimes z)^{(\infty)}
\right){\bf 1_\gamma}, \nonumber
\end{align}
where $\Gamma$ is an appropriate cycle (see Proposition 4.2 in \cite{JNS}). 

We define linear isomorphisms $T_m: M_m\to \bigoplus_{i=0}^m\C x^i$ ($m\in\Z_{\ge 0}$) as 
\begin{equation*}
T_m\left(
(f^{(1)})^{i}\left((f\otimes z)^{(\infty)}\right)^{m-i}{\bf 1_\gamma}
\right)
=x^i\quad (0\le i\le m).
\end{equation*}

 %%%%%%%%%%%%%%%%%%%%%%%%%%%%%%%%%%%%%%%%%
\begin{thm}\label{thm ckz iv}
For $m\in\Z_{\ge 0}$, we have 
\begin{equation*}
T_m\circ \left(
G_1^{(\infty)}+\frac{1}{2}h_0\gamma^{(\infty)}_1
\right)=\frac{1}{\hbar^2}\widehat{H}_\mathrm{IV}
\left(
x, \hbar\frac{\partial}{\partial x}, m\hbar,b,t
\right)
\circ T_m
\end{equation*}
as linear maps from $M_m$ to $\bigoplus_{i=0}^m\C x^i$ 
with 
\begin{equation*}
\gamma^{(1)}_0=\frac{b}{\hbar},\quad  \gamma^{(\infty)}_1=-\frac{t}{\hbar}, \quad \gamma^{(\infty)}_2=\frac{1}{\hbar}. 
\end{equation*}
\end{thm}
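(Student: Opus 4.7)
The plan is to verify the identity by evaluating both sides on the basis $v_i = (f^{(1)})^i ((f\otimes z)^{(\infty)})^{m-i}\mathbf{1}_\gamma$, $0 \le i \le m$, of $M_m$, which is exactly the basis that $T_m$ sends to the monomials $\{x^i\}$. Both sides turn out to be tridiagonal in the index $i$, so only three scalar identities need to be checked for each $i$.

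First I will expand $(G_1^{(\infty)} + \tfrac{1}{2} h_0 \gamma_1^{(\infty)}) v_i$ as a combination of $v_{i-1}$, $v_i$, $v_{i+1}$. The three summands of $G_1^{(\infty)}$ act as follows. The operator $-e^{(1)}(f\otimes z)^{(\infty)}$ first raises the $(f\otimes z)^{(\infty)}$ power on the $\infty$ factor and then, via the standard Verma-module relation $e\,(f^{(1)})^i\mathbf{1} = i(\gamma_0^{(1)}-i+1)(f^{(1)})^{i-1}\mathbf{1}$ on the first factor, yields a multiple of $v_{i-1}$. The operator $f^{(1)}(e\otimes z)^{(\infty)}$ yields a multiple of $v_{i+1}$; the key input is
\begin{equation*}
(e\otimes z)(f\otimes z)^k \mathbf{1}_{\gamma^{(\infty)}} = k\, \gamma_2^{(\infty)} (f\otimes z)^{k-1} \mathbf{1}_{\gamma^{(\infty)}},
\end{equation*}
which follows from $[e\otimes z, f\otimes z] = h\otimes z^2$ together with the facts that $h\otimes z^2$ commutes with $f\otimes z$ in $\g'_{(2)}$ (since $[h,f]\otimes z^3 = 0$) and acts on $\mathbf{1}_{\gamma^{(\infty)}}$ by the scalar $\gamma_2^{(\infty)}$. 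The diagonal summands $\tfrac{1}{2} h^{(1)} \gamma_1^{(\infty)}$ and $\tfrac{1}{2} h_0 \gamma_1^{(\infty)}$ act on $v_i$ through the weights $h^{(1)} v_i = (\gamma_0^{(1)}-2i) v_i$ and $h_0 v_i = (\gamma_0^{(1)}-2m) v_i$.

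In parallel, I will apply $\hbar^{-2}\widehat{H}_{\mathrm{IV}}(x, \hbar\partial/\partial x, m\hbar, b, t)$ directly to $x^i$ using the formula from Definition \ref{def hamiltonian}, and read off the coefficients of $x^{i-1}$, $x^i$, $x^{i+1}$. The contribution $x(\hbar\partial_x)^2$ gives the $x^{i-1}$ term with a factor $\hbar^2 i(i-1)$; the term $(x^2-tx-b)\hbar\partial_x$ contributes to all three slots; and $-m\hbar x -(m\hbar-2b)t$ contributes to $x^{i+1}$ and $x^i$. Matching the three pairs of coefficients after the substitution $b = \hbar \gamma_0^{(1)}$, $t = -\hbar \gamma_1^{(\infty)}$, $\gamma_2^{(\infty)} = 1/\hbar$ then gives the theorem.

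I expect the only real work to be clerical: tracking signs, powers of $\hbar$, and, in the coefficient of $x^i$, correctly combining the contribution of $\tfrac{1}{2} h^{(1)} \gamma_1^{(\infty)}$ from inside $G_1^{(\infty)}$ with the separate $\tfrac{1}{2} h_0 \gamma_1^{(\infty)}$ summand, which together reproduce the mixed contribution of $-tx\,\hbar\partial_x$ and $-(m\hbar-2b)t$ on the Hamiltonian side. No further conceptual step is required; this parallels the strategy used for the case $\mathrm{J=V}$ in Theorem \ref{thm ckz v}.
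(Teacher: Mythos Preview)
Your proposal is correct and is precisely the ``direct computation'' that the paper invokes: you pick the natural basis $v_i$ of $M_m$, show that both the confluent KZ operator and $\widehat{H}_{\mathrm{IV}}$ act tridiagonally, and match the three families of coefficients after the parameter identifications. The representation-theoretic inputs you isolate (the Verma relation $e\,(f)^i\mathbf{1}=i(\gamma_0^{(1)}-i+1)(f)^{i-1}\mathbf{1}$, the identity $(e\otimes z)(f\otimes z)^k\mathbf{1}_{\gamma^{(\infty)}}=k\gamma_2^{(\infty)}(f\otimes z)^{k-1}\mathbf{1}_{\gamma^{(\infty)}}$ coming from $[e\otimes z,f\otimes z]=h\otimes z^2$ and the truncation $z^3=0$, and the weight values of $h^{(1)}$ and $h_0$) are exactly the ones needed, so nothing is missing.
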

%%%%%%%%%%%%%%%%%%%%%%%%%%%%%%%%%%%%%%
\begin{proof}
It follows from direct computations. 
\end{proof}
%%%%%%%%%%%%%%%%%%%%%%%%%

As in the cases of $\mathrm{J=V, VI}$, Theorem \ref{thm ckz iv} implies the following 
symmetry. 
\begin{prop}
If a function $\Phi^{\mathrm{IV}}(x,t,a,b,\hbar)$ is a solution to the Schr\"odinger equation for 
the quantum fourth Painlev\'e equation, then 
the function $\Psi^{\mathrm{IV}}(x,t,a,b,\hbar)$ defined as 
\begin{equation}\label{def h 4}
\Psi^{\mathrm{IV}}(x,t,a,b,\hbar)=\Phi^{\mathrm{IV}}(\sqrt{-1}x, \sqrt{-1}t, -a,-b,\hbar)
\end{equation}
is a solution to 
\begin{equation*}
\hbar\frac{\partial}{\partial t}\Psi^\mathrm{IV}(x,t)=\widehat{H'}_\mathrm{IV} \Psi^\mathrm{IV}(x,t), 
\end{equation*}
where $\widehat{H'}_\mathrm{IV}$ are obtained by replacing $\hbar$ with $-\hbar$ in $\widehat{H}_\mathrm{IV}$. 
\end{prop}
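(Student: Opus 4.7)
The plan is to verify the symmetry by a direct substitution calculation, exactly as done for the $\mathrm{J=V}$ and $\mathrm{J=VI}$ cases. The underlying reason the statement is true is that $\widehat{H}_{\mathrm{IV}}$ behaves in a controlled way under the scaling $(x,t)\mapsto(\sqrt{-1}\,x,\sqrt{-1}\,t)$ combined with the parameter flip $(a,b)\mapsto(-a,-b)$, and the factors of $\sqrt{-1}$ accumulated by each term conspire to leave precisely a single sign flip in front of the first-order derivative term, which is what converts $\widehat{H}_{\mathrm{IV}}$ into $\widehat{H'}_{\mathrm{IV}}$.

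Concretely, start from the equation satisfied by $\Phi^{\mathrm{IV}}$ with the parameters $(-a,-b)$ substituted for $(a,b)$:
$$
\hbar\frac{\partial}{\partial T}\Phi=\bigl[X(\hbar\partial_X)^2+(X^2-TX+b)\hbar\partial_X+aX+(a-2b)T\bigr]\Phi.
$$
Now set $X=\sqrt{-1}\,x$, $T=\sqrt{-1}\,t$. The chain rule gives $\partial_T=-\sqrt{-1}\,\partial_t$, $\partial_X=-\sqrt{-1}\,\partial_x$, and hence $\partial_X^{\,2}=-\partial_x^{\,2}$. The left-hand side becomes $-\sqrt{-1}\,\hbar\,\partial_t\Psi^{\mathrm{IV}}$. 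On the right, each summand acquires an overall factor $\sqrt{-1}$ after collecting: $X(\hbar\partial_X)^2=-\sqrt{-1}\,x(\hbar\partial_x)^2$; the coefficient $X^2-TX+b$ equals $-x^2+tx+b$, which multiplied by $\hbar\partial_X=-\sqrt{-1}\,\hbar\partial_x$ yields $\sqrt{-1}\,(x^2-tx-b)\hbar\partial_x$; and finally $aX=\sqrt{-1}\,ax$, $(a-2b)T=\sqrt{-1}\,(a-2b)t$.

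Dividing both sides by $-\sqrt{-1}$ produces
$$
\hbar\,\partial_t\Psi^{\mathrm{IV}}=\bigl[x(\hbar\partial_x)^2-(x^2-tx-b)\hbar\partial_x-ax-(a-2b)t\bigr]\Psi^{\mathrm{IV}},
$$
and this is precisely $\hbar\,\partial_t\Psi^{\mathrm{IV}}=\widehat{H'}_{\mathrm{IV}}\Psi^{\mathrm{IV}}$, since replacing $\hbar$ by $-\hbar$ in $\widehat{H}_{\mathrm{IV}}$ leaves the quadratic derivative term and the non-derivative terms invariant while flipping the sign of the first-order derivative term.

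There is no substantive obstacle: the only thing to watch is careful bookkeeping of signs and factors of $\sqrt{-1}$. Unlike the $\mathrm{J=V}$ and $\mathrm{J=VI}$ cases, where $t\mapsto -t$ (together with the parameter flip) already suffices, the $\mathrm{J=IV}$ case requires the complex scaling of both $x$ and $t$ because the coefficient $x^2-tx-b$ of $\hbar\partial_x$ mixes monomials of different degrees in $x$; multiplying $x$ and $t$ by a common factor $\sqrt{-1}$ is precisely what is needed to make all relevant terms rescale uniformly. As a cross-check, the same identity can be read off from the integral formula in Theorem~\ref{thm ckz iv}: comparing the confluent KZ equation at $\kappa=1/\hbar$ and $\kappa=-1/\hbar$ yields the same symmetry after a change of integration variables $u_i\mapsto\sqrt{-1}\,v_i$, but the direct PDE manipulation above is self-contained.
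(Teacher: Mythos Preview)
Your proof is correct and matches the paper's own approach, which simply states ``It follows from direct computations.'' You have supplied exactly those computations in full detail, and the cross-check via the confluent KZ equation at $\kappa=\pm 1/\hbar$ that you mention is also the motivation the paper gives just before stating the proposition.
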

%%%%%%%%%%%%%%%%%%%%%%%%%%%%%%%
\begin{proof}
 It follows from direct computations. 
\end{proof}

%\begin{prop}
%Let $w$ and $r$ be automorphisms on a Weyl algebra $\C[x, \partial/\partial x, t, \partial/ \partial t, a,b, \hbar]$ defined as 
%\begin{align*}f
%&w(x)=\sqrt{-1}x, \quad w\left(\frac{\partial}{\partial x}\right)=\frac{1}{\sqrt{-1}}\frac{\partial}{\partial x}, 
%\quad w(t)=\sqrt{-1}t,\quad 
%w\left(\frac{\partial}{\partial t}\right)=\frac{1}{\sqrt{-1}}\frac{\partial}{\partial t}, 
%\\ 
%&w(a)=-a,\quad w(b)=-b, \quad w(\hbar)=\hbar, 
%\end{align*}
%and 
%\begin{align*}
%&r(x)=x, \quad r\left(\frac{\partial}{\partial x}\right)=\frac{\partial}{\partial x}, \quad 
%r(t)=t, \quad r\left(\frac{\partial}{\partial t}\right)=\frac{\partial}{\partial t}, 
%\\
%&r(a)=a, \quad r(b)=b,  \quad r(\hbar)=-\hbar. 
%\end{align*}
%Then, we have 
%\begin{equation*}
%w\left(
%\hbar \frac{\partial}{\partial t} -\widehat{H}_\mathrm{IV}\left(
%x,\hbar\frac{\partial}{\partial x},a,b,t
%\right)
%\right)=\frac{1}{\sqrt{-1}}\left(
%\hbar \frac{\partial}{\partial t} -r\left(\widehat{H}_\mathrm{IV}\left(
%x,\hbar\frac{\partial}{\partial x},a,b,t\right)\right)
%\right). 
%\end{equation*}
%\end{prop}
%\begin{proof}
%It follows from direct computations. 
%\end{proof}
%%%%%%%%%%%%%%%%%%%%%%%%%%%%%%%%%%%%%
\subsection{The case of $\mathrm{J=III}$}
Let $n=1$, $r_1=2$, $r_\infty=1$, and $z_1=0$. Then $M=M(\gamma^{(1)}_0, \gamma^{(1)}_1)\otimes M'(\gamma^{(\infty)}_1)$ and 
the confluent KZ equation for an unknown function $\Psi(\gamma^{(\infty)}_1)$ taking values in $M$ is 
\begin{equation}\label{eq 3}
\kappa \gamma^{(\infty)}_1\frac{\partial}{\partial \gamma^{(\infty)}_1} \Psi(\gamma^{(\infty)}_1)=
\left(
G_0^{(\infty)}-\frac{1}{4}h_0(h_0+2)
\right)\Psi(\gamma^{(\infty)}_1), 
\end{equation}
where $\kappa\in\C$, 
\begin{align*}
G_{0}^{(\infty)}=&\frac{1}{2}\left(e^{(1)}f^{(1)}+f^{(1)}e^{(1)}+\frac{1}{2}h^{(1)}h^{(1)}\right)
-\frac{1}{2}(h\otimes z)^{(1)}\gamma^{(\infty)}_1,
\\
h_0=&h^{(1)}+h^{(\infty)}. 
\end{align*}

Let 
$M_m$ ($m\in\Z_{\ge 0}$) be the weight space of $M$ with the weight $\gamma^{(1)}_0-2m$, namely, 
\begin{align*}
M_m=&\left\{
v\in M\ \bigg|  \ h_0(v)=
\left(\gamma^{(1)}_0-2m\right)v
\right\}
\\=&\bigoplus_{i=0}^m\C (f^{(1)})^i((f\otimes z)^{(1)})^{m-i}{\bf 1_\gamma}. 
\end{align*}
An integral formula taking values in $M_m$ for a solution to the confluent KZ equation \eqref{eq 3}  is 
\begin{align}
\int_\Gamma &\prod_{i=1}^mdu_i
\prod_{1\le i<j\le m}(u_i-u_j)^{2/\kappa} \prod_{i=1}^mu_i^{-\gamma^{(1)}_0/\ka}
\label{eq integral formula iii}
\\
&\times \exp\left(
\frac{1}{\ka}\sum_{i=1}^m\left(\frac{\gamma^{(1)}_1}{u_i}+u_i\gamma^{(\infty)}_1\right)
\right)
\prod_{i=1}^m\left(
\frac{f^{(1)}}{u_i}+\frac{(f\otimes z)^{(1)}}{u_i^2}
\right){\bf 1_\gamma}, \nonumber
\end{align}
where $\Gamma$ is an appropriate cycle (see Proposition 4.2 in \cite{JNS}). 

We define linear isomorphisms $T_m: M_m\to \bigoplus_{i=0}^m\C x^i$ ($m\in\Z_{\ge 0}$) as 
\begin{equation*}
T_m\left(
(f^{(1)})^{i}\left((f\otimes z)^{(1)}\right)^{m-i}{\bf 1_\gamma}
\right)
=x^i\quad (0\le i\le m).
\end{equation*}
 %%%%%%%%%%%%%%%%%%%%%%%%%%%%%%%%%%%%%%%%%
\begin{thm}\label{thm ckz iii}
For $m\in\Z_{\ge 0}$, we have 
\begin{equation*}
T_m\circ\left(
G_0^{(\infty)}-\frac{1}{4}h_0(h_0+2)+\frac{1}{2}\gamma^{(1)}_1\gamma^{(\infty)}_1
\right) =\frac{t}{\hbar^2}\widehat{H}_\mathrm{III}\left(
x,\hbar\frac{\partial}{\partial x}, m\hbar,b,t
\right)\circ T_m
\end{equation*}
as linear maps from $M_m$ to $ \bigoplus_{i=0}^m\C x^i$ 
with 
\begin{equation*}
\gamma^{(1)}_0=2(m-1)-\frac{b}{\hbar},\quad  \gamma^{(1)}_1=\frac{1}{\hbar},\quad  \gamma^{(\infty)}_1=-\frac{t}{\hbar}. 
\end{equation*} 
\end{thm}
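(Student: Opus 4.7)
The plan is to prove the operator identity by evaluating both sides on the explicit basis $v_i = (f^{(1)})^i((f \otimes z)^{(1)})^{m-i}{\bf 1}_\gamma$ ($0 \le i \le m$) of $M_m$ and showing that each $v_i$ maps to the same linear combination of $v_{i-1}, v_i, v_{i+1}$ that $\frac{t}{\hbar^2}\widehat{H}_\mathrm{III}$ produces on $x^i$ under $T_m$. The right-hand side is immediate: applying the formula for $\widehat{H}_\mathrm{III}$ to $x^i$ with $a = m\hbar$ gives
\begin{equation*}
\frac{t}{\hbar^2}\widehat{H}_\mathrm{III} x^i = \frac{m-i}{\hbar} x^{i+1} + \Bigl(i(i-1) - \frac{ib}{\hbar}\Bigr) x^i - \frac{it}{\hbar} x^{i-1},
\end{equation*}
which under the identifications $\gamma^{(1)}_1 = 1/\hbar$, $\gamma^{(\infty)}_1 = -t/\hbar$, and $b/\hbar = 2(m-1) - \gamma^{(1)}_0$ becomes $(m-i)\gamma^{(1)}_1 x^{i+1} + i(\gamma^{(1)}_0 - 2m + i + 1) x^i + i\gamma^{(\infty)}_1 x^{i-1}$.

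For the left-hand side I would first use $[e^{(1)}, f^{(1)}] = h^{(1)}$ to rewrite $G_0^{(\infty)} = f^{(1)} e^{(1)} + \tfrac{1}{2}h^{(1)} + \tfrac{1}{4}(h^{(1)})^2 - \tfrac{1}{2}\gamma^{(\infty)}_1 (h \otimes z)^{(1)}$. Since each $v_i$ is a weight vector with $h^{(1)} v_i = h_0 v_i = (\gamma^{(1)}_0 - 2m)v_i$ (using $h^{(\infty)} {\bf 1}_{\gamma^{(\infty)}} = 0$), the purely scalar contribution $\tfrac{1}{2}h^{(1)} + \tfrac{1}{4}(h^{(1)})^2 - \tfrac{1}{4}h_0(h_0 + 2)$ collapses to zero on $M_m$, and it suffices to check that
\begin{equation*}
\bigl(f^{(1)} e^{(1)} - \tfrac{1}{2}\gamma^{(\infty)}_1 (h \otimes z)^{(1)} + \tfrac{1}{2}\gamma^{(1)}_1 \gamma^{(\infty)}_1\bigr) v_i
\end{equation*}
matches the expression above. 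Commuting $e^{(1)}$ past $(f^{(1)})^i((f \otimes z)^{(1)})^{m-i}$ with the help of $e^{(1)}(f^{(1)})^i = (f^{(1)})^i e^{(1)} + i(f^{(1)})^{i-1}(h^{(1)} - i + 1)$ and $[e, f \otimes z] = h \otimes z$, and then premultiplying by $f^{(1)}$, one finds $f^{(1)} e^{(1)} v_i = (m-i)\gamma^{(1)}_1 v_{i+1} + i(\gamma^{(1)}_0 - 2m + i + 1) v_i$. Analogously, $[(h \otimes z), f^{(1)}] = -2(f \otimes z)^{(1)}$ yields $(h \otimes z)^{(1)} v_i = \gamma^{(1)}_1 v_i - 2i v_{i-1}$. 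The two $\tfrac{1}{2}\gamma^{(1)}_1 \gamma^{(\infty)}_1 v_i$ contributions then cancel and the remaining three terms reproduce the right-hand side exactly.

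The main obstacle is the careful bookkeeping of auxiliary terms involving $(f \otimes z^2)^{(1)} {\bf 1}_\gamma$ which appear naturally from $[(h \otimes z), (f \otimes z)] = -2(f \otimes z^2)$ in the course of the commutations above; one must check that these contributions are absent in the module realization underlying the integral formula \eqref{eq integral formula iii}. Granting this, the identity reduces to the clean three-term recursion and matching is then a routine verification.
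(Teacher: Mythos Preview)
Your approach is exactly the ``direct computation'' the paper invokes, and your three--term recursions for $f^{(1)}e^{(1)}v_i$ and $(h\otimes z)^{(1)}v_i$ are correct.

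The only point you leave open---what you flag as the ``main obstacle''---is in fact no obstacle at all. The module $M(\gamma^{(1)}_0,\gamma^{(1)}_1)$ is, by the definition recalled at the start of Section~\ref{sec kz}, a module over $\g_{(1)}=\g[z]/z^{2}\g[z]$ (the value $r_1=2$ printed in the paper is a misprint for $r_1=1$, as the two-parameter module makes clear). Hence $x\otimes z^{2}$ acts as the zero operator for every $x\in\g$; in particular $[h\otimes z,f\otimes z]=-2(f\otimes z^{2})=0$ on $M(\gamma^{(1)}_0,\gamma^{(1)}_1)$, so the auxiliary terms you worry about simply do not arise. With that observation your computation is complete.
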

%%%%%%%%%%%%%%%%%%%%%%%%%%%%%%%%%%%%%%
\begin{proof}
It follows from direct computations. 
\end{proof}
%%%%%%%%%%%%%%%%%%%%%%%%%%%%%%%%%%%%%%%%%%%%%%%%%%%%

As in the cases above, the Schr\"odinger equation for 
the quantum third Painlev\'e equation  has the following symmetry.  
%%%%%%%%%%%%%%%%%%%%%%%%%%%%
\begin{prop}
If a function $\Phi^{\mathrm{III}}(x,t,a,b,\hbar)$ is a solution to the Schr\"odinger equation for 
the quantum third Painlev\'e equation, then 
the function $\Psi^{\mathrm{III}}(x,t,a,b,\hbar)$ defined as 
\begin{equation}\label{def h 3}
\Psi^{\mathrm{III}}(x,t,a,b,\hbar)=\Phi^{\mathrm{III}}(-x, t, -a,-b,\hbar)
\end{equation}
is a solution to 
\begin{equation*}
\hbar\frac{\partial}{\partial t}\Psi^\mathrm{III}(x,t)=\widehat{H'}_\mathrm{III} \Psi^\mathrm{III}(x,t), 
\end{equation*}
where $\widehat{H'}_\mathrm{III}$ are obtained by replacing $\hbar$ with $-\hbar$ in $\widehat{H}_\mathrm{III}$. 
\end{prop}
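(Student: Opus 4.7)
The plan is to derive the claimed equation for $\Psi^{\mathrm{III}}$ by conjugating the Schr\"odinger equation for $\Phi^{\mathrm{III}}(\,\cdot\,, t, -a, -b, \hbar)$ with the reflection operator $R$ defined by $(Rf)(x) = f(-x)$. Since $R x R^{-1} = -x$ and $R\, \partial_x\, R^{-1} = -\partial_x$, for any polynomial differential operator $D(x,\partial_x)$ one has $R\, D(x,\partial_x)\, R^{-1} = D(-x,-\partial_x)$. By the very definition of $\Psi^{\mathrm{III}}$, one has $\Psi^{\mathrm{III}}(\,\cdot\,, t, a, b, \hbar) = R\, \Phi^{\mathrm{III}}(\,\cdot\,, t, -a, -b, \hbar)$, so after specializing the hypothesized Schr\"odinger equation to parameters $(-a,-b)$, applying $R$ (which commutes with $\partial_t$) to both sides, and conjugating the Hamiltonian, I obtain
\[
\hbar \frac{\partial}{\partial t}\Psi^{\mathrm{III}}(x,t,a,b,\hbar) = \widehat{H}_{\mathrm{III}}(-x,-\hbar\partial_x,-a,-b,t)\, \Psi^{\mathrm{III}}(x,t,a,b,\hbar).
\]

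The proposition thus reduces to the operator identity
\[
\widehat{H}_{\mathrm{III}}(-x,-\hbar\partial_x,-a,-b,t) = \widehat{H}'_{\mathrm{III}}(x,\hbar\partial_x,a,b,t),
\]
which I would verify term by term from the explicit formula $t\,\widehat{H}_{\mathrm{III}} = x^2(\hbar\partial_x)^2 - (x^2+bx+t)\hbar\partial_x + ax$. The term $x^2(\hbar\partial_x)^2$ is even in both $x$ and $\partial_x$, hence invariant under the substitution. The polynomial $x^2+bx+t$ is invariant under $(x,b)\mapsto(-x,-b)$, while $\hbar\partial_x$ flips sign, so the middle term becomes $+(x^2+bx+t)\hbar\partial_x$. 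The term $ax$ is invariant under $(x,a)\mapsto(-x,-a)$. Comparing with $t\,\widehat{H}'_{\mathrm{III}} = x^2(\hbar\partial_x)^2 + (x^2+bx+t)\hbar\partial_x + ax$, obtained from $t\,\widehat{H}_{\mathrm{III}}$ by $\hbar\mapsto -\hbar$, the two expressions coincide.

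There is no real obstacle here: the argument is a parity check on the three monomials of the polynomial Hamiltonian, parallel to the symmetries already established for types VI, V, IV. The only point to keep in mind is the choice of substitution: for $\mathrm{J}=\mathrm{III}$ the time variable $t$ need not be flipped, because $t$ enters $\widehat{H}_{\mathrm{III}}$ only inside the factor $x^2+bx+t$, and the sole term of odd degree in $\hbar$ already picks up its required sign from the combination of $x\to -x$ and $b\to -b$.
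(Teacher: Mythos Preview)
Your argument is correct and is precisely the direct computation the paper alludes to: the paper's proof consists of the single line ``It follows from direct computations,'' and what you have written is exactly that computation spelled out in detail. Your reduction via the reflection operator $R$ and the term-by-term parity check on $t\,\widehat{H}_{\mathrm{III}}$ are accurate, and your remark that $t$ need not be touched because the only $\hbar$-odd term already flips sign under $(x,b)\mapsto(-x,-b)$ identifies the salient feature of the type $\mathrm{III}$ case.
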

%%%%%%%%%%%%%%%%%%%%%%%%%%%%%%%
\begin{proof}
 It follows from direct computations. 
\end{proof}

%\begin{prop}
%Let $w$ and $r$ be automorphisms on a Weyl algebra $\C[x, \partial/\partial x, t, \partial/ \partial t, a,b, \hbar]$ defined as 
%\begin{align*}
%&w(x)=-x, \quad w\left(\frac{\partial}{\partial x}\right)=-\frac{\partial}{\partial x}, 
%\quad w(t)=t,\quad 
%w\left(\frac{\partial}{\partial t}\right)=\frac{\partial}{\partial t}, 
%\\ 
%&w(a)=-a,\quad w(b)=-b, \quad w(\hbar)=\hbar, 
%\end{align*}
%and 
%\begin{align*}
%&r(x)=x, \quad r\left(\frac{\partial}{\partial x}\right)=\frac{\partial}{\partial x}, \quad 
%r(t)=t, \quad r\left(\frac{\partial}{\partial t}\right)=\frac{\partial}{\partial t}, 
%\\
%&r(a)=a, \quad r(b)=b,  \quad r(\hbar)=-\hbar. 
%\end{align*}
%Then, we have 
%\begin{equation*}
%w\left(
%\hbar \frac{\partial}{\partial t} -\widehat{H}_\mathrm{III}\left(
%x,\hbar\frac{\partial}{\partial x},a,b,t
%\right)
%\right)=
%\hbar \frac{\partial}{\partial t} -r\left(\widehat{H}_\mathrm{III}\left(
%x,\hbar\frac{\partial}{\partial x},a,b,t\right)\right). 
%\end{equation*}
%\end{prop}
%\begin{proof}
%It follows from direct computations. 
%\end{proof}

%%%%%%%%%%%%%%%%%%%%%%%%%%%%%%%%%%%%%
\subsection{The case of $\mathrm{J=II}$}
Let $n=0$ and $r_\infty=3$. Then $M=M'(\gamma^{(\infty)}_1,\gamma^{(\infty)}_2,\gamma^{(\infty)}_3)$ and 
 the confluent KZ equation  for an unknown function $\Psi(\gamma^{(\infty)}_1)$ taking values in $M$ is 
\begin{equation}\label{eq 2}
\kappa \gamma^{(\infty)}_3\frac{\partial}{\partial \gamma^{(\infty)}_1} \Psi(\gamma^{(\infty)}_1)=
\left(
G^{(\infty)}_2-\frac{1}{4}\left(\gamma^{(\infty)}_1\right)^2-\frac{1}{2}\gamma^{(\infty)}_2+\frac{1}{2}h^{(\infty)}\gamma^{(\infty)}_2
\right)\Psi(\gamma^{(\infty)}_1), 
\end{equation}
where $\kappa\in\C$ and 
\begin{equation*}
G^{(\infty)}_2=\frac{1}{2}\left((e\otimes z)^{(\infty)}(f\otimes z)^{(\infty)}+(f\otimes z)^{(\infty)}(e\otimes z)^{(\infty)}
+\frac{1}{2}\left((h\otimes z)^{(\infty)}\right)^2\right). 
\end{equation*}

Let 
$M_m$ ($m\in\Z_{\ge 0}$) be the weight space of $M$ with the weight $-2m$, namely, 
\begin{align*}
M_m=&\left\{
v\in M\ \bigg|  \ h^{(\infty)}(v)=
-2mv
\right\}
\\=&\bigoplus_{i=0}^m\C ((f\otimes z)^{(\infty)})^i((f\otimes z^2)^{(\infty)})^{m-i}{\bf 1_\gamma}. 
\end{align*}
An integral formula taking values in $M_m$ for a solution to the confluent KZ equation \eqref{eq 2}  is 
\begin{align}
\int_\Gamma &\prod_{i=1}^mdu_i
\prod_{1\le i<j\le m}(u_i-u_j)^{2/\kappa}  \exp\left(
\frac{1}{\ka}\sum_{i=1}^m\left(u_i\gamma^{(\infty)}_1+\frac{u_i^2}{2}\gamma^{(\infty)}_2+\frac{u_i^3}{3}\gamma^{(\infty)}_3\right)
\right)\label{eq integral formula iii}
\\
&\times 
\prod_{i=1}^m\left(
(f\otimes z)^{(\infty)}+(f\otimes z^2)^{(\infty)}u_i
\right){\bf 1_\gamma}, \nonumber 
\end{align}
where $\Gamma$ is an appropriate cycle (see Proposition 4.2 in \cite{JNS}). 

We define linear isomorphisms $T_m: M_m\to \bigoplus_{i=0}^m\C x^i$ ($m\in\Z_{\ge 0}$) as 
\begin{equation*}
T_m\left(
((f\otimes z)^{(\infty)})^{i}((f\otimes z^2)^{(\infty)})^{m-i}{\bf 1_\gamma}
\right)
=x^i\quad (0\le i\le m).
\end{equation*}

 %%%%%%%%%%%%%%%%%%%%%%%%%%%%%%%%%%%%%%%%%
\begin{thm}\label{thm ckz ii}
For $m\in\Z_{\ge 0}$, we have 
\begin{equation*}
T_m\circ\left(
G^{(\infty)}_2-\frac{1}{4}\left(\gamma^{(\infty)}_1\right)^2-\frac{1}{2}\gamma^{(\infty)}_2+\frac{1}{2}h^{(\infty)}\gamma^{(\infty)}_2
\right)=\frac{2}{\hbar^2}\widehat{H}_\mathrm{II}\circ T_m, 
\end{equation*}
as linear maps from $M_m$ to $\bigoplus_{i=0}^m\C x^i$  
with 
\begin{equation*}
 \gamma^{(\infty)}_1=\frac{t}{\hbar}, \quad \gamma^{(\infty)}_2=0, \quad \gamma^{(\infty)}_3=\frac{2}{\hbar}. 
\end{equation*}
\end{thm}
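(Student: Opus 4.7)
The plan is to verify the stated operator identity by a direct matrix computation on the basis $v_i := F_1^i F_2^{m-i} {\bf 1_\gamma}$ ($0 \le i \le m$) of $M_m$, where I write $E_p := (e\otimes z^p)^{(\infty)}$, $F_p := (f\otimes z^p)^{(\infty)}$, $H_p := (h\otimes z^p)^{(\infty)}$. Since both sides of the claimed equation preserve this $(m+1)$-dimensional space, and $T_m(v_i) = x^i$, it suffices to compute the action of the left-hand operator on each $v_i$ and match coefficients against $(2/\hbar^2)\widehat{H}_\mathrm{II}$ (with $a = m\hbar$) acting on $x^i$.

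Before the main computation I would record two simplifications coming from the truncated algebra $\g'_{(3)}$ (with the additional relations $e\otimes z^3 = f\otimes z^3 = 0$ on $M'$), combined with the specialization $\gamma^{(\infty)}_1 = t/\hbar$, $\gamma^{(\infty)}_2 = 0$, $\gamma^{(\infty)}_3 = 2/\hbar$. First, $H_p$ is central on $M'$ for $p\ge 2$: every bracket $[h\otimes z^p,\, x\otimes z^q]$ with $x\in\{e,f\}$ and $q\ge 1$ lands in $z^{p+q}\g[z]$ with $p+q\ge 3$, and is either truncated or killed by the vanishing relations; hence $H_2 = 0$ and $H_3 = 2/\hbar$ identically on $M'$. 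Second, symmetrisation gives $G_2^{(\infty)} = F_1 E_1 + \tfrac12 H_2 + \tfrac14 H_1^2$, which therefore collapses to $G_2^{(\infty)} = F_1 E_1 + \tfrac14 H_1^2$.

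Then, iterating the commutators $[H_1, F_1] = -2 F_2$ and $[H_1, F_2] = -2 F_3 = 0$, with $F_1 F_2 = F_2 F_1$, one obtains $H_1 v_i = (t/\hbar) v_i - 2i\, v_{i-1}$, and hence $H_1^2 v_i = (t/\hbar)^2 v_i - 4i(t/\hbar) v_{i-1} + 4i(i-1) v_{i-2}$. Similarly, using $[E_1, F_1] = H_2 = 0$ and $[E_1, F_2] = H_3 = 2/\hbar$, a short induction yields $E_1 F_2^n {\bf 1_\gamma} = (2n/\hbar) F_2^{n-1}{\bf 1_\gamma}$, so $E_1 v_i = (2/\hbar)(m-i) F_1^i F_2^{m-i-1}{\bf 1_\gamma}$ and $F_1 E_1 v_i = (2/\hbar)(m-i) v_{i+1}$. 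Pushed through $T_m$, the shifts $v_i\mapsto (m-i) v_{i+1}$, $v_i\mapsto i\, v_{i-1}$ and $v_i\mapsto i(i-1) v_{i-2}$ become the operators $x(m - x\partial_x)$, $\partial_x$ and $\partial_x^2$ on $\bigoplus_{i=0}^m \C x^i$.

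Assembling, the $\tfrac14 H_1^2$-piece becomes $\partial_x^2 - (t/\hbar)\partial_x + t^2/(4\hbar^2)$, the $F_1 E_1$-piece becomes $(2/\hbar)(mx - x^2 \partial_x)$, the subtraction $-\tfrac14(\gamma^{(\infty)}_1)^2 = -t^2/(4\hbar^2)$ cancels the constant tail, and $-\tfrac12 \gamma^{(\infty)}_2 + \tfrac12 h^{(\infty)}\gamma^{(\infty)}_2$ vanishes outright because $\gamma^{(\infty)}_2 = 0$. The remainder $\partial_x^2 - (2/\hbar)(x^2 + t/2)\partial_x + (2m/\hbar) x$ equals $(2/\hbar^2)\widehat{H}_\mathrm{II}(x, \hbar\partial/\partial x, m\hbar, t)$ after multiplying through by $\hbar^2/2$, proving the identity. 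The only real obstacle is bookkeeping the scalar tails: the otherwise mysterious adjustments $-\tfrac14(\gamma^{(\infty)}_1)^2 - \tfrac12 \gamma^{(\infty)}_2 + \tfrac12 h^{(\infty)}\gamma^{(\infty)}_2$ on the left-hand side are forced precisely to absorb the $t^2/(4\hbar^2)$ produced by $\tfrac14 H_1^2$ and to kill any drift from a potentially non-central $H_2$, which is here eliminated thanks to $\gamma^{(\infty)}_2 = 0$.
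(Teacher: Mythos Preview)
Your proof is correct and is precisely the direct computation the paper alludes to, carried out in full detail on the basis $v_i$ of $M_m$. The reductions you make (centrality of $H_2$ and $H_3$ on $M'$, the collapse $G_2^{(\infty)}=F_1E_1+\tfrac14 H_1^2$ under $\gamma_2^{(\infty)}=0$, and the identification of the shift operators with $x(m-x\partial_x)$, $\partial_x$, $\partial_x^2$) are exactly what is needed, and the final match with $(2/\hbar^2)\widehat{H}_\mathrm{II}$ at $a=m\hbar$ is verified.
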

\begin{proof}
It follows from direct computations. 
\end{proof}

%%%%%%%%%%%%%%%%%%%%%%%%%%%%%%%%%%%%%%%%%%%

As in the cases above,  the Schr\"odinger equation for the quantum second Painlev\'e equation  has the following symmetry.  
%%%%%%%%%%%%%%%%%%%%%%%%%%%%
\begin{prop}
If a function $\Phi^{\mathrm{II}}(x,t,a,\hbar)$ is a solution to  the Schr\"odinger equation for the quantum second Painlev\'e equation, then 
the function $\Psi^{\mathrm{II}}(x,t,a,\hbar)$ defined as 
\begin{equation}\label{def h 2}
\Psi^{\mathrm{II}}(x,t,a,\hbar)=\Phi^{\mathrm{II}}(-x, t, -a,\hbar)
\end{equation}
is a solution to 
\begin{equation*}
\hbar\frac{\partial}{\partial t}\Psi^\mathrm{II}(x,t)=\widehat{H'}_\mathrm{II} \Psi^\mathrm{II}(x,t), 
\end{equation*}
where $\widehat{H'}_\mathrm{II}$ are obtained by replacing $\hbar$ with $-\hbar$ in $\widehat{H}_\mathrm{IV}$. 
\end{prop}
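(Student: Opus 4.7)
The plan is to reduce the proposition to the change of variable $(x,a)\mapsto(-x,-a)$ in the Schr\"odinger equation satisfied by $\Phi^{\mathrm{II}}$, and to compare the resulting operator with $\widehat{H'}_{\mathrm{II}}$ term by term. I would begin from the assumption
\begin{equation*}
\hbar\frac{\partial}{\partial t}\Phi^{\mathrm{II}}(x,t,a,\hbar)=\widehat{H}_{\mathrm{II}}\left(x,\hbar\frac{\partial}{\partial x},a,t\right)\Phi^{\mathrm{II}}(x,t,a,\hbar),
\end{equation*}
and evaluate both sides at the point $(-x,t,-a,\hbar)$. The left-hand side becomes $\hbar\,\partial_t \Psi^{\mathrm{II}}(x,t,a,\hbar)$ since $t$ is unchanged.

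The key step is to rewrite the right-hand side as a differential operator in $x$ acting on $\Psi^{\mathrm{II}}(x,t,a,\hbar)=\Phi^{\mathrm{II}}(-x,t,-a,\hbar)$. Using the chain rule, $\partial/\partial y=-\partial/\partial x$ when $y=-x$, hence $\hbar\partial/\partial y$ acts on functions of $x$ as $-\hbar\partial/\partial x$, while $(\hbar\partial/\partial y)^{2}=(\hbar\partial/\partial x)^{2}$. Combined with $y^{2}=x^{2}$ and $-a\,y=a\,x$, the right-hand side transforms as
\begin{equation*}
\widehat{H}_{\mathrm{II}}\!\left(y,\hbar\tfrac{\partial}{\partial y},-a,t\right)\bigg|_{y=-x}
=\tfrac12\!\left(\hbar\tfrac{\partial}{\partial x}\right)^{2}+\left(x^{2}+\tfrac{t}{2}\right)\hbar\tfrac{\partial}{\partial x}+ax.
\end{equation*}

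On the other hand, replacing $\hbar$ by $-\hbar$ in the definition of $\widehat{H}_{\mathrm{II}}$ gives
\begin{equation*}
\widehat{H'}_{\mathrm{II}}=\tfrac12\!\left(\hbar\tfrac{\partial}{\partial x}\right)^{2}+\left(x^{2}+\tfrac{t}{2}\right)\hbar\tfrac{\partial}{\partial x}+ax,
\end{equation*}
which coincides with the operator obtained above. Thus $\hbar\,\partial_t\Psi^{\mathrm{II}}=\widehat{H'}_{\mathrm{II}}\Psi^{\mathrm{II}}$, completing the argument.

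There is no serious obstacle here; the only subtlety is a bookkeeping one: the three terms of $\widehat{H}_{\mathrm{II}}$ transform differently under the substitution. The quadratic derivative term is invariant under $x\mapsto -x$ on its own, the first-order derivative term becomes invariant because $\hbar\partial/\partial x$ picks up a sign while $x^{2}$ does not, and the potential term $ax$ requires the simultaneous sign change $a\mapsto -a$ to match. Once these signs are tracked, the identification with $\widehat{H'}_{\mathrm{II}}$ is immediate, in parallel with the analogous arguments already used in the $\mathrm{J}=\mathrm{VI},\mathrm{V},\mathrm{IV},\mathrm{III}$ cases.
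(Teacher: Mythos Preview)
Your proof is correct and is exactly the direct computation the paper has in mind; the paper's own proof consists of the single line ``It follows from direct computations,'' and your argument carries this out explicitly by tracking the sign changes in each term of $\widehat{H}_{\mathrm{II}}$ under $(x,a)\mapsto(-x,-a)$ and matching with the $\hbar\to-\hbar$ substitution.
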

%%%%%%%%%%%%%%%%%%%%%%%%%%%%%%%
\begin{proof}
 It follows from direct computations. 
\end{proof}

%\begin{prop}
%Let $w$ and $r$ be automorphisms on a Weyl algebra $\C[x, \partial/\partial x, t, \partial/ \partial t, a, \hbar]$ defined as 
%\begin{align*}
%&w(x)=-x, \quad w\left(\frac{\partial}{\partial x}\right)=-\frac{\partial}{\partial x}, 
%\quad w(t)=t,\quad 
%w\left(\frac{\partial}{\partial t}\right)=\frac{\partial}{\partial t}, 
%\\ 
%&w(a)=-a, \quad w(\hbar)=\hbar, 
%\end{align*}
%and 
%\begin{align*}
%&r(x)=x, \quad r\left(\frac{\partial}{\partial x}\right)=\frac{\partial}{\partial x}, \quad 
%r(t)=t, \quad r\left(\frac{\partial}{\partial t}\right)=\frac{\partial}{\partial t}, 
%\\
%&r(a)=a,   \quad r(\hbar)=-\hbar. 
%\end{align*}
%Then, we have 
%\begin{equation*}
%w\left(
%\hbar \frac{\partial}{\partial t} -\widehat{H}_\mathrm{II}\left(
%x,\hbar\frac{\partial}{\partial x},a,t
%\right)
%\right)=
%\hbar \frac{\partial}{\partial t} -r\left(\widehat{H}_\mathrm{II}\left(
%x,\hbar\frac{\partial}{\partial x},a,t\right)\right). 
%\end{equation*}
%\end{prop}
%\begin{proof}
%It follows from direct computations. 
%\end{proof}

\bigskip
\bigskip
\noindent{\it \bf  Acknowledgement.}\quad

The author are grateful to 
 H. Kimura, Y. Ohta and Y. Yamada 
for helpful discussions and suggestions.

\bigskip
\bigskip

%%%%%%%%%%%%%%%%%%%%%%%%%%%%%%%%%%%%%

\end{document}